\documentclass[10pt,a4paper]{article}
\usepackage{Risex-Lab-Paper}
\labsetup{
  name={RISE-X Lab},
  shortname={RISE-X Lab},
  tagline={School of Computer Science,\\Shanghai Jiao Tong University},
  report={},
  date={September 29, 2026},
  website={},
  paper-type={},
  accent={294D73},
  panel={F1F3F6},
  masthead={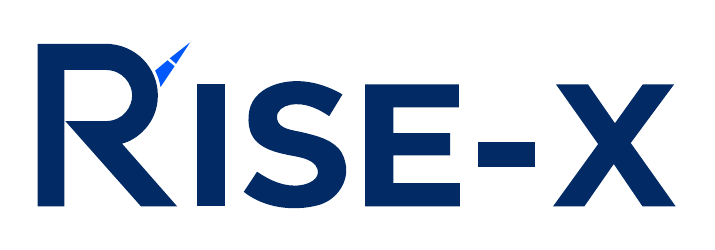},
  masthead-width={25mm}
}

\usepackage{pgfplots}
\pgfplotsset{compat=1.18}

\labsetup{
  text-width={6.2in},       
  title-size={16.5},
  title-leading={27},
  running-title-size={10},
  running-title-leading={0},
  date-size={10},
  date-leading={12}
}
\usepackage{multirow}
\newif\ificlrfinal
\iclrfinaltrue

\providecommand{\masE}{\mathbb{E}}
\providecommand{\masP}{\mathbb{P}}

\providecommand{\masITT}{\operatorname{ITT}}
\providecommand{\masCACE}{\operatorname{CACE}}

\theoremstyle{definition}

\theoremstyle{remark}

\theoremstyle{plain}
\renewcommand{\partmark}{--}

\definecolor{provisionalorange}{RGB}{200,90,0}

\definecolor{draftblue}{RGB}{0,80,190}

\newcolumntype{Y}{>{\raggedright\arraybackslash}X}

\title{%
  \texorpdfstring{%
    \raisebox{-0.10em}
{\includegraphics[height=1.2em]{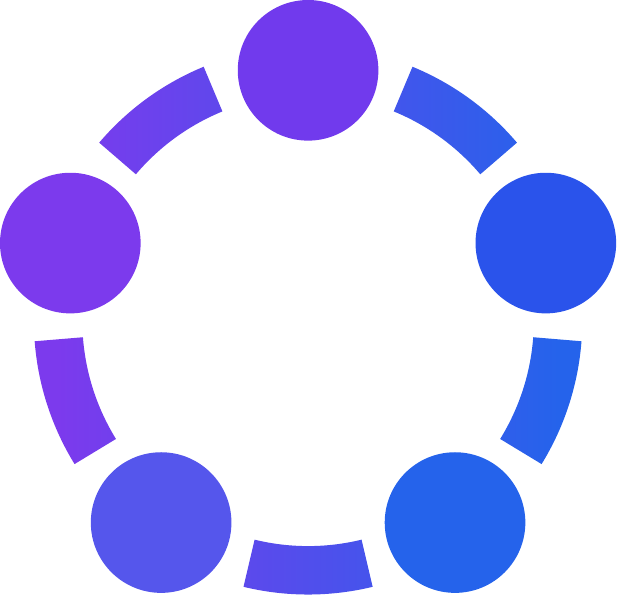}}%
    \hspace{0.20em}%
  }{}%
  OpenCollab: A Multi-Agent Coding Framework
  with Programmable Collaboration and Controllable Runtime
}
\author{\bfseries%
  Chun-Wah Hsu\textsuperscript{1,*}\quad
  Kai Gong\textsuperscript{1,*}\quad
  Yu Wu\textsuperscript{1,*}\quad
  Xianhe Chen\textsuperscript{2}\quad
  \textbf{Mengyang Liu\textsuperscript{1}\quad
  Jie Li\textsuperscript{3}\\
  Hanyu Li\textsuperscript{1}\quad
  Zhixuan Liu\textsuperscript{1}\quad
  Naisheng Tang\textsuperscript{4}\quad
  Jiaying Chi\textsuperscript{1}\quad
  Ziheng Fan\textsuperscript{5}}\quad
  \textbf{Xuning He\textsuperscript{1}
  \\
  Xiaokang Yang\textsuperscript{1}\quad
  Xue Jiang\textsuperscript{6,7}\quad
  Yihong Dong\textsuperscript{1,\Letter}}%
}
\affiliations{%
  \textsuperscript{1}Shanghai Jiao Tong University\quad
  \textsuperscript{2}University of Cambridge\quad
  \textsuperscript{3}Nanyang Technological University\\
  \textsuperscript{4}The University of Hong Kong\quad
  \textsuperscript{5}Imperial College London\quad
  \textsuperscript{6}Peking University\quad
  \textsuperscript{7}Tencent%
}
\paperlinks{\href{https://RISE-X-Lab.github.io/OpenCollab/}{Project page}\enspace /\enspace
 \href{https://github.com/RISE-X-Lab/OpenCollab}{Code}}
\correspondence{Correspondence: \href{mailto:dongyh@sjtu.edu.cn}{dongyh@sjtu.edu.cn}}

\begin{document}
\makelabtitle{%
Multi-agent coding systems are designed to tackle complex software engineering tasks through collaboration. However, existing evaluations typically assume configured organizations are followed faithfully, whereas reality differs. This behavioral gap, combined with differences in underlying system components, prevents clear attribution of observed gains.
To this end, we introduce OpenCollab, a multi-agent coding framework that provides a unified infrastructure for programmable collaboration and controllable runtime. Specifically, OpenCollab unifies organization design, enforces experimental control on a shared runtime, and tracks execution through fine-grained event streams. On this basis, we define Adherence to quantify whether the declared organization is actually realized.
Our experiments reveal that agents collaborate very differently across configurations: changing any single dimension shifts Adherence, from 47.2\% to as high as 97.2\%. Furthermore, extensive agentic coding benchmarks show that a two-coder workflow built on OpenCollab establishes new SOTA performance compared to the mainstream harnesses such as Mini-SWE-agent, Codex CLI, and Claude Code, showing that a well-designed organization can outperform strong existing harnesses, while OpenCollab's single-agent configuration uses the fewest tokens across all evaluated suites.
OpenCollab establishes a unified multi-agent infrastructure for easy programmable collaboration and controlled causal evaluation.}
\authornote{*}{Equal contribution, work done as research interns at RISE-X Lab, School of Computer Science, Shanghai Jiao Tong University.}
\authornote{}{\Letter: Corresponding author.}

\section{Introduction}
\label{sec:intro}

Autonomous coding agents have achieved remarkable progress, demonstrating promising capabilities to resolve complex issues in real-world software repositories \citep{jimenez2024swebench,wang2024openhands}. As real-world programming tasks grow in complexity, managing vast contexts and multi-stage execution within a single model becomes increasingly challenging. To push the boundaries of performance, the paradigm is shifting toward multi-agent systems \citep{li2023camel,hong2024metagpt,qian2024chatdev,wu2023autogen, dong2024selfcollaboration}. By assigning specialized organization, such as analysts, developers, and testers, these systems tackle intricate tasks through division of labor and interactive coordination. This design assumes that structured organization improves performance, yet configured collaboration may not actually take place.

\begin{figure}[t]
\centering
\includegraphics[width=\linewidth]{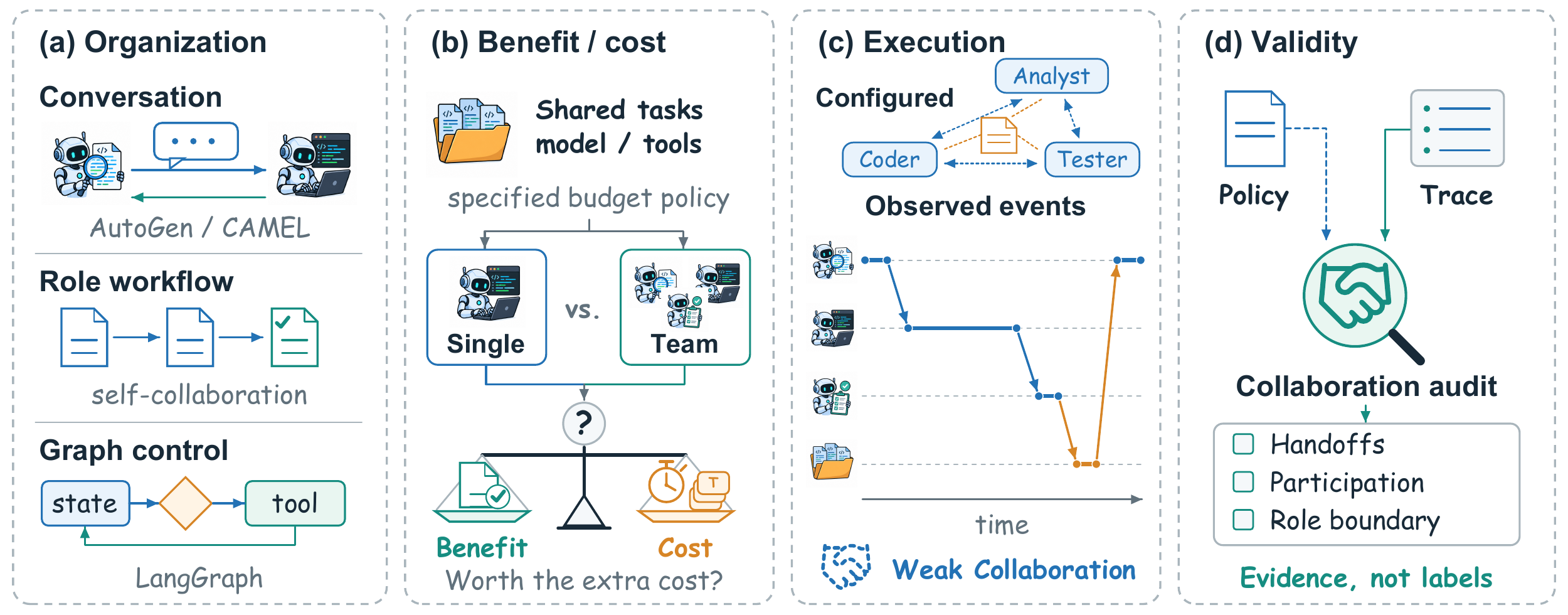}
\caption{Perspectives on multi-agent collaboration:
(a) organization, (b) benefit and cost,
(c) configured versus observed execution, and
(d) collaboration auditing.}
\label{fig:thesis}
\end{figure}

Despite their potential, current multi-agent frameworks lack the infrastructure to evaluate actual effectiveness. First, existing systems lack a unified mechanism to explicitly declare an organizational topology. Second, multi-agent evaluations are rarely controlled, as harness differences and run-to-run variance can obscure what produced a performance gap \citep{zhang2026harness,kaliyev2026noisefloor}. Third, intermediate collaborative processes suffer from poor observability, since benchmarks report only final outcomes \citep{kirgis2026loganalysis} and logs are written for human debugging, not attribution. Finally, this opacity obscures a fundamental mismatch between declared and realized organizations, where agents frequently work in isolation without actual delegation (Figure~\ref{fig:thesis}). Consequently, standard benchmarks cannot determine whether an organization actually helps, because they wouldn't verify whether collaboration took place.

In this paper, we introduce OpenCollab, a multi-agent infrastructure for programmable collaboration and controllable runtime. To enable explicit declaration, OpenCollab provides unified abstractions to statically specify role boundaries and communication topologies. To controlled comparisons, its runtime executes all configurations on a shared substrate, holding non-organizational factors constant. To structured logs, it captures step-by-step actions and resource usage through fine-grained event streams. Building on it, OpenCollab introduces Adherence to quantify the realized organization and, by recording it per run, checks when the complier average causal effect (CACE) \citep{bloom1984noshows,angrist1996iv} can be attributed to the organization.

To systematically evaluate multi-agent systems, we leverage OpenCollab's event streams to dissect their execution across five dimensions (model, tool set, budget, context, and topology). We find that agents collaborate very differently across configurations: changing any single dimension shifts Adherence, from 47.2\% to as high as 97.2\%; at 47.2\%, lead agents monopolize budgets or bypass teammates. Furthermore, our audit of ten mainstream agent frameworks reveals that none natively guarantees controlled experimental comparison or verifies behavioral delivery. Against Claude Code, Codex CLI, and Mini-SWE-agent, OpenCollab establishes new SOTA performance on SWE-bench Pro (64.25\%), Terminal-Bench 2.1 (83.15\%), and DeepSWE (69.91\%). Finally, OpenCollab demonstrates that effective multi-agent collaboration provides substantial gains over working alone. Its collaborative design OpenCollab (Duo) outperforms its single-agent counterpart OpenCollab (Base) by +3.4 points on Terminal-Bench 2.1 and +14.2 points on DeepSWE.

We summarize our main contributions as follows:
\begin{itemize}

    \item We demonstrate that agents collaborate very differently across configurations: changing a single dimension moves the collaboration rate from 47.2\% to as high as 97.2\%.
	
	\item We formalize Adherence metric to quantify the alignment between a declared organization and its actual runtime execution, grounded in causal identification theory. Furthermore, our rigorous audit of ten mainstream agent frameworks reveals that none can compute Adherence for every organization it runs or ensure controlled comparisons.
	
	\item We present OpenCollab, an infrastructure for programmable collaboration and controllable runtime, supporting teams and workflows. It isolates experimental variables and accurately reconstructs execution trajectories via fine-grained event streams.	
	
	\item On extensive mainstream coding benchmarks, our sample multi-agent OpenCollab (Duo) sets new SOTA performance over other mainstream harnesses, and surpasses the single-agent OpenCollab (Base) by +3.4 points on Terminal-Bench and +14.2 on DeepSWE.
\end{itemize}

\section{Motivation}
\label{sec:motivation}

\begin{table}[t!]
\caption{Audit of seven controlled evaluation conditions. \checkmark: by the artifact's own settings and records; \partmark: only through the researcher's code, or in part; $\times$: not met.}
\label{tab:instruments}
\centering
\resizebox{\linewidth}{!}{%
\begin{tabular}{@{}lccccccc@{}}
\toprule
& \multicolumn{5}{c}{\textbf{Held: can the factor be set explicitly?}} &
\multicolumn{2}{c}{\textbf{Verified: can the run be checked?}} \\
\cmidrule(lr){2-6}\cmidrule(lr){7-8}
\textbf{Artifact} & Model & Tools & Budget & Context & Topology & Realized & Compared \\
\midrule
\multicolumn{8}{@{}l}{\textit{Multi Agent frameworks}} \\
AutoGen {\scriptsize(\citealp{wu2023autogen})} & \checkmark & \checkmark & \partmark & \checkmark & \checkmark & $\times$ & $\times$ \\
AG2 {\scriptsize(\citealp{ag2ai2024ag2})} & \partmark & \partmark & $\times$ & \partmark & \partmark & \partmark & $\times$ \\
LangGraph {\scriptsize(\citealp{langchainai2024langgraph})} & \partmark & \partmark & $\times$ & \partmark & \partmark & $\times$ & $\times$ \\
\midrule
\multicolumn{8}{@{}l}{\textit{Coding agents}} \\
SWE-agent {\scriptsize(\citealp{yang2024sweagent})} & \checkmark & \checkmark & \partmark & \checkmark & $\times$ & \checkmark & $\times$ \\
OpenHands {\scriptsize(\citealp{wang2024openhands})} & \checkmark & \checkmark & \partmark & \partmark & \partmark & \partmark & $\times$ \\
Claude~Code {\scriptsize(\citealp{anthropic2026claudecode})} & \checkmark & \checkmark & \partmark & $\times$ & \partmark & \partmark & $\times$ \\
Codex~CLI {\scriptsize(\citealp{openai2026codex})} & \checkmark & \partmark & $\times$ & $\times$ & \partmark & \partmark & $\times$ \\
DeepSeek~Harness {\scriptsize(\citealp{deepseek2026harness})} & \checkmark & \checkmark & $\times$ & \partmark & \partmark & \partmark & $\times$ \\
\midrule
\multicolumn{8}{@{}l}{\textit{Evaluation frameworks}} \\
Inspect~AI {\scriptsize(\citealp{inspect2024})} & \checkmark & \checkmark & \partmark & \partmark & \checkmark & \partmark & \partmark \\
HAL {\scriptsize(\citealp{kapoor2026hal})} & \partmark & $\times$ & $\times$ & $\times$ & $\times$ & \partmark & $\times$ \\
\midrule
OpenCollab (Ours) & \checkmark & \checkmark & \checkmark & \checkmark & \checkmark & \checkmark & \checkmark \\
\bottomrule
\end{tabular}%
}
\end{table}

Deploying multi-agent teams on repository-level tasks typically begins by configuring specialized roles, such as an Analyst, Coder, and Tester \citep{dong2024selfcollaboration}. While intended to divide labor, trajectory analysis reveals that delegation rarely occurs when models decide autonomously. Mostly, the lead agent bypasses configured teammates to resolve issues in isolation, while in other cases, it exhausts its token allowance before issuing any handoff. Consequently, the rate at which the declared organization is realized (formalized as \textit{Adherence}) frequently falls below half as the system collapses into a single-agent loop. We term this divergence between configured and executed organizations the \textit{illusion of collaboration}.

Attempting to resolve this illusion via prompt engineering or log filtering introduces deeper dilemmas. Incrementally mandating delegation in prompts causes Adherence to swing wildly, demonstrating that collaboration is highly sensitive to phrasing rather than organizational structure alone. Conversely, filtering logs to collaborating runs introduces severe post-treatment selection bias. Agents solve easy tasks alone and seek help only when stuck, so this subset skews toward harder tasks. That penalizes the multi-agent arm and invalidates causal claims.

The persistence of this illusion and the failure of simple workarounds stem from underlying architectural limitations in current agent frameworks. First, decentralized resource management allows agents to exhaust budgets locally within isolated execution loops. Because token limits are not governed by a shared pre-call gate across the entire team, lead agents frequently consume the run allowance before reaching a state where delegation could occur. Second, current platforms record execution through conversational prose logs designed for human debugging rather than programmatic analysis. These text transcripts lack role-level attribution, making it difficult to verify which agent executed a specific tool or consumed a given portion of the budget. This opacity prevents automated tools from confirming whether declared topology edges were walked, concealing unexecuted organizations beneath overall benchmark scores.

Addressing these limitations requires three core system capabilities. To prevent unobserved budget exhaustion and eliminate confounders, the framework must operate on a shared runtime enforcing pre-call admission across all arms. To overcome opaque prose logs, execution must be serialized into a structured event stream binding every action and token cost to a distinct role. Finally, to resolve selection bias, the system must compute Adherence from these streams and use CACE to support adherence-aware causal attribution of organizational effects. Together, these capabilities define an infrastructure that unifies programmable collaboration with controlled evaluation, providing the foundation for OpenCollab.

\begin{figure}[t]
	\centering
	\includegraphics[width=\linewidth]{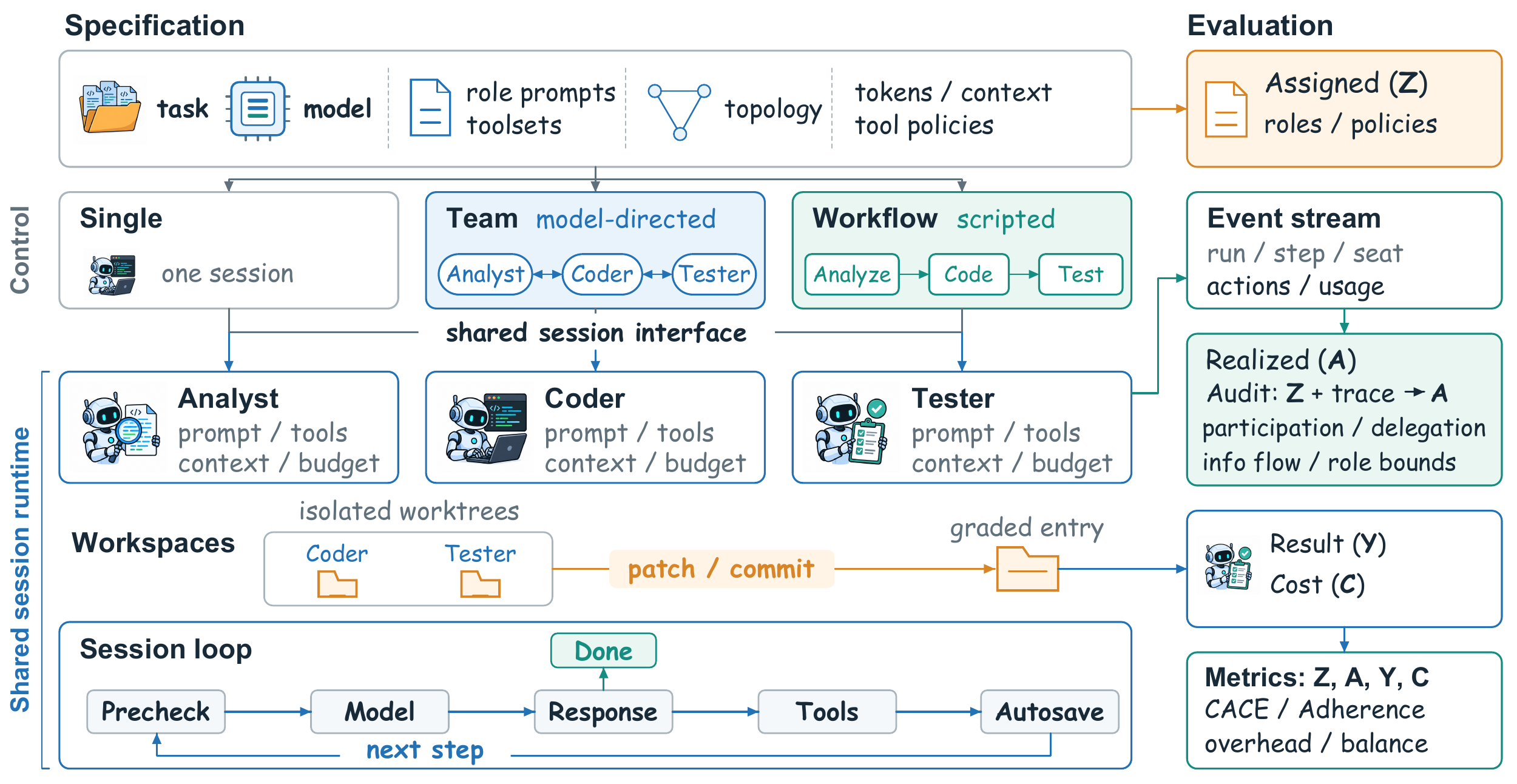}
	\caption{OpenCollab overview. Single, Team, and Workflow controllers share a session runtime. Configuration and execution traces support Adherence and resource costs auditing.}
	\label{fig:mechanism}
\end{figure}

\section{Methodology}

We next describe OpenCollab in detail. In particular, we discuss 1) the architecture of OpenCollab (\S\ref{sec:oc-architecture}), 2) how to program custom collaborative organizations (\S\ref{sec:oc-declare}), 3) how to ensure fair and controlled comparisons (\S\ref{sec:oc-controlled}), 4) how to trace executions using structured event streams (\S\ref{sec:oc-observability}), and 5) how to measure collaboration effectiveness with built-in metrics (\S\ref{sec:oc-metrics}). Figure \ref{fig:mechanism} provides an overview of OpenCollab.

\subsection{Architecture Overview}
\label{sec:oc-architecture}

OpenCollab introduces a decoupled, two-layer architecture, departing from traditional frameworks where each agent encapsulates its own isolated execution loop. As illustrated in Figure~\ref{fig:control-runtime} (Appendix~\ref{app:arch}), this design separates the high-level orchestration of agent interactions from the low-level mechanics of model execution.

At the foundational layer, OpenCollab operates on a shared session runtime. The fundamental unit of execution is a session, representing a single role in an organization. Every session runs through an identical ten-state finite state machine loop detailed in Appendix~\ref{app:arch}. This substrate handles context management, model invocations, and tool execution, ensuring unit consistency across runs.

Built atop the shared runtime, the orchestration layer operates through independent controllers. These controllers govern session topology, lifecycle, and message routing, abstracting control flow away from individual agent implementations. Because all controllers converge on this shared substrate, OpenCollab isolates structural design from low-level runtime mechanics. This decoupling supports diverse organizational paradigms without altering the execution engine, removes infrastructure noise, and facilitates both programmable collaboration and controlled evaluation.

\subsection{Programmable Collaboration}
\label{sec:oc-declare}

To structure multi-agent execution, the orchestration layer supports two primary collaboration paradigms alongside single-agent baselines, summarized as experimental arms in Table~\ref{tab:arms} (Appendix~\ref{app:arch}): a model-directed team mode and a programmatic workflow mode.

The team controller enables a model-directed collaboration paradigm. Users provide a declarative configuration specifying role prompts, accessible toolsets for each role, and a directed communication topology indicating which roles can address one another, following the declaration schema in Appendix~\ref{app:declare}. Once configured, the language model operates within these boundaries, deciding whether, when, and how to delegate subtasks to available teammates. This mode mimics workplace dynamics, allowing multiple agent sessions to cooperate concurrently and pass messages along declared topology edges.

The workflow controller enables programmatic and deterministic agent orchestration. OpenCollab exposes an interface that allows researchers to explicitly govern the collaboration graph. Through simple Python scripts, users can define call orders, fan-out structures, and conditional state transitions across multiple sessions. The script directly controls message routing and termination, ensuring structured execution rather than relying on volatile model decisions.

By separating orchestration from execution, these controllers drastically reduce implementation overhead: reproducing the core multi-agent pattern of \textit{edict} ($\sim$24k lines of code) requires merely 239 lines in OpenCollab. While this enables effortless organizational prototyping, evaluating true collaborative effectiveness still requires strictly controlled conditions.

\subsection{Controlled Evaluation and Comparison}
\label{sec:oc-controlled}

OpenCollab embeds control primitives into the shared runtime to enforce identical conditions across all experimental arms. Specifically, Table~\ref{tab:instruments} defines seven conditions for controlled comparison, comprising five factors held constant at runtime and two structural verification requirements. Table~\ref{tab:configs} (Appendix~\ref{app:five-dims}) details how the five controlled factors guarantee that non-organizational variables remain aligned across arms, including the base model, per-role tool availability, token and cost budgets, context compaction policies, and the communication topology. The two verification requirements ensure auditability by recording each run's declaration next to its realized trajectory and evaluating one against the other. As our audit of ten artifacts in Appendix~\ref{app:matrix} reveals, none of them satisfies all seven conditions, whereas OpenCollab meets all seven.

OpenCollab enforces these conditions through dedicated runtime gates rather than post-hoc filtering. Tool calls and inter-agent messages are intercepted at invocation time, where any request outside the declared role permissions or communication topology is refused and recorded, following the declaration rules in Appendix~\ref{app:declare}. For resource control, Figure~\ref{fig:control-runtime} shows how the runtime enforces an identical per-role token ceiling via pre-call admission within the session state machine. Before each model call, the runtime validates an estimated token reservation against the remaining allowance, halting the session with an explicit termination record if exceeded. Appendix~\ref{app:schema} provides the exact termination codes, which prevent unmetered overruns and cleanly separate runs that exhausted their budget from those that finished voluntarily. These explicit termination records and invocation-time gates form the precise foundation of OpenCollab's execution traceability.

\subsection{Execution Traceability and Observability}
\label{sec:oc-observability}

To capture these records, OpenCollab replaces prose logs with an ordered, append-only event stream. Throughout an execution, every operational step emits a structured record containing a monotonic step index, a timestamp, a run identifier, a role identifier, an event type, and a typed payload. As summarized in Table~\ref{tab:events} (Appendix~\ref{app:schema}), the runtime defines seven distinct event types to capture the full lifecycle of each role, covering model calls, tool executions, refused actions, session terminations, spawn attempts, worktree modifications, and context compaction triggers.

Because every event is explicitly stamped with its role identifier, the event stream allows automated scripts to reconstruct the realized execution topology and attribute resource costs without parsing natural language text. To ensure physical delivery attribution on the filesystem, non-entry roles in a team operate in isolated Git worktrees over a shared object store (Appendix~\ref{app:arch}). Only the entry role workspace is graded upon run completion. Appendix~\ref{app:schema} details this event schema.

\subsection{Evaluation of Cooperation Effectiveness}
\label{sec:oc-metrics}
\label{sec:oc-adherence}

Leveraging the structured event stream, OpenCollab extracts runtime variables to compute automated metrics, addressing the evaluation biases outlined in Section 2. Each run $r$ is represented by the assigned configuration $Z_r \in \{\text{Single}, \text{Workflow}, \text{Team}\}$, the realized-organization indicator $A_r \in \{0, 1\}$, the task result $Y_r \in \{0, 1\}$, and the token cost $C_s(r)$ for each role $s \in \mathcal{S}$, where $C(r) = \sum_{s \in \mathcal{S}} C_s(r)$ is the total run cost. For Team runs, $A_r = \mathrm{adh}(r)$, the full structural-adherence indicator defined below.

\textbf{Structural Adherence.} Adherence establishes internal validity by verifying that an assigned organizational topology was delivered at runtime. Table~\ref{tab:adherence} (Appendix~\ref{app:axes}) defines six verification axes: participation, delegation, role boundary, budget sharing, information flow, and context policy. For each applicable axis $x$, the runtime produces an objective audit verdict $v_x(r) \in \{\text{adherent}, \text{deviant}, \text{unverifiable}\}$. A run is judged adherent if and only if every applicable axis passes:
$
	\mathrm{adh}(r) = \prod_{x \in \mathcal{X}(Z_r)} \mathbf{1}[v_x(r) = \text{adherent}],
$
where $\mathcal{X}(Z_r)$ denotes the set of applicable verification axes for assignment $Z_r$. The Adherence rate observed across all runs $\mathcal{R}_a$ within an arm, the sample analogue of $\alpha_{\mathrm{adh}} = \Pr(A(z) = 1)$ (Appendix~\ref{mas:setup}), is:
\begin{equation}\label{eq:alpha}
	\widehat{\alpha}_{\mathrm{adh}} = \frac{1}{|\mathcal{R}_a|} \sum_{r \in \mathcal{R}_a} \mathrm{adh}(r).
\end{equation}
A deviant or unverifiable verdict on any applicable axis gives $\mathrm{adh}(r) = 0$, so delegation alone does not make a run adherent: a run in which only one of two required teammates acts has $\mathrm{adh}(r) = 0$. Because $A_r = \mathrm{adh}(r)$ on Team runs, $\widehat{\alpha}_{\mathrm{adh}}$ is the empirical rate at which the assigned Team organization is realized.

\textbf{Configuration-level complier effect.} To resolve the dilution of observed task success gains, OpenCollab divides the Pass@1 difference from Single (the intention-to-treat effect, ITT) by the Adherence rate, giving the complier average causal effect (CACE):
\begin{equation}\label{eq:wald-ratio}
    \widehat{\mathrm{CACE}} := \widehat{\mathrm{ITT}} \,/\, \widehat{\alpha}_{\mathrm{adh}} = \Delta\widehat{\mathrm{Pass@1}} \,/\, \widehat{\alpha}_{\mathrm{adh}}.
\end{equation}
Under mean exclusion, $\delta_0 = 0$ (Eq.~\eqref{mas:eq:exclusion}), the population ratio $\mathrm{ITT}/\alpha_{\mathrm{adh}}$ identifies the configuration-level complier effect: the effect of assigning the target Team configuration rather than Single among runs in which the assigned Team organization is fully realized, not an isolated effect of coordination itself.
This causal interpretation relies on two conditions. One-sided compliance holds because the Single arm cannot realize the target Team organization (Appendix~\ref{mas:setup}). The mean exclusion restriction requires that, among tasks whose Team execution has $A_r = 0$, assigning the Team configuration rather than Single leaves the mean outcome unchanged. Without it, ITT$/\alpha_{\mathrm{adh}}$ differs from the complier effect by $(1-\alpha_{\mathrm{adh}})/\alpha_{\mathrm{adh}}$ times that $A_r = 0$ contrast (Theorem~\ref{mas:thm:wald}, Proposition~\ref{mas:prop:sensitivity}). While standard evaluations can only assume this restriction, recording $A_r$ per run on paired tasks allows the $A_r = 0$ Team-versus-Single contrast to be estimated, which provides evidence about mean exclusion rather than proof of it (Appendix~\ref{mas:paired}).

\textbf{Cost metrics.} The same records support two cost metrics, the organizational overhead of runs that do not realize the assigned organization and the workload balance of runs that do; Appendix~\ref{app:cost-metrics} defines both, and our experiments do not report them.

\section{Experiments}
\label{sec:experiment}

\begin{table}[t!]
    \caption{Cross-harness evaluation on SWE-bench Pro, Terminal-Bench 2.1, and DeepSWE.}
    \label{tab:benchmark-context}
    \centering
    \small
    \setlength{\tabcolsep}{0pt}
    \renewcommand{\arraystretch}{1.12}
    \begin{tabular*}{\textwidth}{@{\extracolsep{\fill}} l l c c c c @{}}
        \toprule
        \textbf{Benchmark} & \textbf{Harness} & \textbf{Pass@1 (\%) $\uparrow$} & \textbf{Avg. tokens (M) $\downarrow$} & \textbf{Avg. cost (\$) $\downarrow$} & \textbf{Cache hit (\%) $\uparrow$} \\
        \midrule
        \multirow{5}{*}{SWE-bench Pro}
        & Mini-SWE-Agent & 61.66 & 4.16 & 0.90 & 36.65 \\
        & Codex CLI      & 63.73 & 7.31 & 0.73 & 90.27 \\
        & Claude Code    & 58.03 & 9.38 & 2.54 & 12.50 \\
        & OpenCollab (Base)      & 63.21 & 3.89 & 0.41 & 88.03 \\
        & OpenCollab (Duo)       & \textbf{64.25} & 6.50 & 0.73 & 84.95 \\
        \midrule
        \multirow{5}{*}{Terminal-Bench 2.1}
        & Mini-SWE-Agent & 76.40 & 2.97 & 0.45 & 67.36 \\
        & Codex CLI      & 80.90 & 2.86 & 0.26 & 93.88 \\
        & Claude Code    & 77.53 & 21.95 & 2.73 & 79.32 \\
        & OpenCollab (Base) & 79.78 & 1.54 & 0.19 & 78.38 \\
        & OpenCollab (Duo)  & \textbf{83.15} & 4.60 & 0.60 & 76.47 \\
        \midrule
        \multirow{5}{*}{DeepSWE}
        & Mini-SWE-Agent & 61.95 & 19.36 & 3.27 & 58.81 \\
        & Codex CLI      & 47.79 & 15.01 & 1.30 & 96.47 \\
        & Claude Code    & 56.64 & 77.90 & 7.66 & 91.16 \\
        & OpenCollab (Base) & 55.75 & 11.95 & 1.19 & 90.53 \\
        & OpenCollab (Duo)  & \textbf{69.91} & 26.11 & 2.67 & 89.36 \\
        \bottomrule
    \end{tabular*}
\end{table}

Our evaluations focus on two primary objectives: measuring system-level effectiveness and diagnosing internal collaborative mechanics. We evaluate OpenCollab on SWE-bench Pro \citep{deng2025swebenchpro}, Terminal-Bench 2.1 \citep{merrill2026terminal,tbench2026v21}, and DeepSWE \citep{huang2026deepswe}. For cross-harness evaluations, OpenCollab is benchmarked against Claude Code \citep{anthropic2026claudecode}, Codex CLI \citep{openai2026codex}, and Mini-SWE-agent \citep{yang2024sweagent}. We compare three models: DeepSeek-V4.1-Flash, Qwen3.8-Flash, and GPT-5.6-Luna. Extensive experiments and detailed setup can be found in Appendices~\ref{app:protocol}--\ref{app:limits}.

\subsection{System-Level Performance and Organizational Value}
\label{sec:exp-performance}

We first evaluate system-level performance to establish OpenCollab's competitiveness. Table~\ref{tab:benchmark-context} runs five harnesses with GPT-5.6-Luna on SWE-bench Pro, Terminal-Bench 2.1, and DeepSWE. OpenCollab (Duo) attains the highest Pass@1 on all three (64.25\%, 83.15\%, and 69.91\%). OpenCollab (Base) uses the fewest tokens and the lowest cost on all three benchmarks, and OpenCollab (Duo) costs less than Claude Code on each (Appendix~\ref{app:results}).

OpenCollab (Duo) is a Workflow in which code issues every handoff, so both of its Coders run on every task. Paired by task against OpenCollab (Base), it solves 21 DeepSWE tasks that the single agent fails and fails 5 that the single agent solves (69.91\% vs.\ 55.75\%); on Terminal-Bench 2.1 the split is 6 to 3 (83.15\% vs.\ 79.78\%), too few to separate the two. The workflow spends more tokens, since it runs two complete solving processes and selects between their candidates (Appendix~\ref{app:exp-duo}).

\begin{table}[h]
    \caption{Single-dimension ablations relative to the reference team. \textit{Unverified} runs are those terminated before collaboration; brackets report 95\% Clopper--Pearson confidence intervals.}
    \label{tab:five-factors}
    \centering
    \small
    \setlength{\tabcolsep}{0pt}
    \renewcommand{\arraystretch}{1.12}
    \begin{tabular*}{\textwidth}{@{\extracolsep{\fill}} l l c c c c c @{}}
        \toprule
        \textbf{Dimension} & \textbf{Variant} & \textbf{Adherence (\%) $\uparrow$} & \shortstack{\textbf{95\% CI}} & \textbf{Unverified (\%)} & \textbf{Pass@1 (\%) $\uparrow$} & \shortstack{\textbf{Tokens (M)} $\downarrow$} \\
        \midrule
        \multirow{3}{*}{\centering Model}
        & Qwen3.8-Flash & 47.2 & [30.4, 64.5] & 13.9 & 75.0 & 3.05 \\
        & DeepSeek-V4.1-Flash & 66.7 & [49.0, 81.4] & 22.2 & 72.2 & 3.65 \\
        & GPT-5.6-Luna & 97.2 & [85.5, 99.9] & 2.8 & 61.1 & 5.79 \\
        \midrule
        \multirow{3}{*}{\centering Tools}
        & All tools & 47.2 & [30.4, 64.5] & 13.9 & 75.0 & 3.05 \\
        & No edit tools & 86.1 & [70.5, 95.3] & 0.0 & 66.7 & 3.24 \\
        & Read-only & 94.4 & [81.3, 99.3] & 2.8 & 52.8 & 4.24 \\
        \midrule
        \multirow{3}{*}{\centering Budget}
        & 0.5M / role & 47.2 & [30.4, 64.5] & 2.8 & 44.4 & 0.88 \\
        & 2M / role & 75.0 & [57.8, 87.9] & 8.3 & 69.4 & 4.17 \\
        & 4M / role & 77.8 & [60.8, 89.9] & 2.8 & 75.0 & 5.20 \\
        \midrule
        \multirow{3}{*}{\centering Context}
        & Open card & 47.2 & [30.4, 64.5] & 13.9 & 75.0 & 3.05 \\
        & Optional card & 63.9 & [46.2, 79.2] & 16.7 & 61.1 & 3.58 \\
        & Mandatory card & 91.7 & [77.5, 98.2] & 8.3 & 72.2 & 4.04 \\
        \midrule
        \multirow{3}{*}{\centering Topology}
        & Star & 47.2 & [30.4, 64.5] & 13.9 & 75.0 & 3.05 \\
        & Ring + shortcut & 66.7 & [49.0, 81.4] & 13.9 & 66.7 & 3.92 \\
        & Ring & 91.7 & [77.5, 98.2] & 8.3 & 63.9 & 4.21 \\
        \bottomrule
    \end{tabular*}
\end{table}

\subsection{Deconstructing Collaboration: Dimensions of Adherence}
\label{sec:exp-adherence}

Standard multi-agent evaluations assume that configured roles operate exactly as declared. We test this by deploying a model-directed team and auditing its execution through OpenCollab's event streams across five configuration dimensions: model, tool set, budget, context, and topology.

To isolate the impact of each dimension, we define a baseline reference team using Qwen3.8-Flash, open prompt cards, unrestricted tools, and a budget of 2M tokens per role. For ring topology variants, we instead deploy an analyst-coder-tester structure under the Mandatory card setting (Appendix~\ref{app:five-dims}). Evaluating single-dimension perturbations against this reference team, Table~\ref{tab:five-factors} shows that unconstrained defaults yield only 47.2\% Adherence. However, enforcing constraints like restricted tool boundaries prevents isolated execution, surging Adherence above 90\% and demonstrating that actual collaboration is systematically controllable.

\begin{table}[h]
    \caption{Deployment contrasts and adherence-adjusted estimates across configurations on SWE-bench Pro. ITT is the Pass@1 difference from Single. Adherence is the share of runs with $A_r = 1$. CACE is ITT divided by Adherence. Brackets show sensitivity to the labeling rule for unverifiable verdicts and are not confidence intervals.}
    \label{tab:itt-pp-cace}
    \centering
    \small
    \setlength{\tabcolsep}{0pt}
    \renewcommand{\arraystretch}{1.12}
    \begin{tabular*}{\textwidth}{@{\extracolsep{\fill}} l c c c c @{}}
        \toprule
        \textbf{Configuration} & \textbf{Adherence (\%) $\uparrow$} & \textbf{Pass@1 (\%) $\uparrow$} & \textbf{ITT (\%) $\uparrow$} & \textbf{CACE (\%) $\uparrow$} \\
        \midrule
        Single                & --                      & 69.4 & --      & --                         \\
        \midrule
        Open card (reference) & 47.2 {[}47.2, 61.1{]}   & 75.0 & $+5.6$  & $+11.8$ {[}$+9.1$, $+11.8${]}  \\
        Mandatory card        & 91.7 {[}91.7, 100.0{]}  & 72.2 & $+2.8$  & $+3.0$ {[}$+2.8$, $+3.0${]}    \\
        Budget stated, 2M     & 75.0 {[}75.0, 83.3{]}   & 69.4 & $0.0$   & $0.0$ {[}$0.0$, $0.0${]}       \\
        Read-only             & 94.4 {[}94.4, 97.2{]}   & 52.8 & $-16.7$ & $-17.6$ {[}$-17.6$, $-17.1${]} \\
        \bottomrule
    \end{tabular*}
\end{table}

\subsection{Causal Validity and Trace Diagnostics}
\label{sec:exp-diagnostics}

To compute deployment contrasts, each configuration runs the tasks once and is paired task by task with the Single agent (Table~\ref{tab:itt-pp-cace}). Adherence sets how far CACE (calculated as ITT divided by Adherence, Eq.~\eqref{eq:wald-ratio}) rescales the observed ITT. By default, a run with an unverifiable verdict is assigned $A_r = 0$ (Appendix~\ref{mas:setup}); brackets in our results show the sensitivity of counting these as adherent instead. Under the unconstrained Open card, 47.2\% Adherence rescales an ITT of $+5.6$ to $+11.8$; this gap follows from low Adherence alone and does not by itself indicate a violation of mean exclusion. Reading $+11.8$ as the configuration-level complier effect (Appendix~\ref{mas:appendix}) further requires the $A_r = 0$ Team-versus-Single contrast to have mean zero (Theorem~\ref{mas:thm:wald}). The ratio departs from the complier effect by $(1-\alpha_{\mathrm{adh}})/\alpha_{\mathrm{adh}}$ times this contrast (Proposition~\ref{mas:prop:sensitivity}), a factor that grows as Adherence falls and is small above 90\%. Adherence levels alone cannot verify mean exclusion; only the $A_r = 0$ contrast bears on it. Because OpenCollab records $A_r$ for every run on tasks also run by Single, this contrast is computed rather than assumed (Appendix~\ref{mas:empirical}). Under the Mandatory card, the paired estimate over adherent runs coincides with CACE in this sample ($+3.0$); under the Open card, the $+5.6$ ITT arises on the 19 tasks with $A_r = 0$ ($+15.8$), whereas the 17 adherent tasks show $-5.9$, so $+11.8$ should not be read as the complier effect.

\begin{figure}[h]
    \centering
    \includegraphics[width=\linewidth]{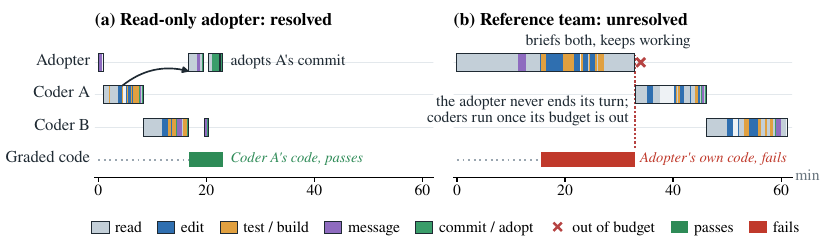}
    \caption{Two execution trajectories from Table~\ref{tab:five-factors} on a shared time axis. Lanes denote agents; blocks represent model calls colored by tool usage.}
    \label{fig:case-traces}
\end{figure}

OpenCollab's event streams show whether the assigned organization is realized at runtime (Figure~\ref{fig:case-traces}). In the reference team (Figure~\ref{fig:case-traces}b), the adopter briefs both coders but monopolizes execution until exhausting its token budget, never submitting teammate work. Restricting the adopter to Read-only tools (Figure~\ref{fig:case-traces}a) removes its ability to edit files, compelling delegation to Coders and the successful adoption of Coder A's patch. Further trace analyses are in Appendix~\ref{app:case-traces}.

\section{Related Work}
\label{sec:related}

Autonomous coding agents increasingly adopt multi-agent architectures, drawing on classical organization theory \citep{fox1981organizational,galbraith1974organization,malone1994coordination,decker1993taems,decker1995gpgp,horling2004survey}. Modern frameworks operationalize these concepts through multi-agent interaction \citep{li2023camel,hong2024metagpt,qian2024chatdev,chen2024agentverse,wu2023autogen}, stateful graphs \citep{langchainai2024langgraph}, or shared-state management \citep{liu2026multi}. These frameworks lower the engineering barrier to deploying complex collaborative topologies. However, existing platforms entangle organizational design with low-level execution, forcing agents to manage clients, contexts, and tools independently. OpenCollab breaks this coupling by isolating orchestration from the runtime, providing a unified substrate to evaluate diverse structures under identical system mechanics.

Yet, whether multi-agent collaboration outperforms single-agent baselines remains contested. Early work highlighted the benefits of multi-agent debate \citep{du2024debate}, voting ensembles \citep{li2024moreagents}, and repeated sampling \citep{brown2024monkeys}. Conversely, recent studies report mixed or negative returns from adding agents or calls \citep{smit2024mad,wang2024rethinking,chen2024morecalls,zhao2026entropy,bertalanic2026ringelmann,cemri2025whyfail}, even under equal token budgets \citep{tran2026single,kim2026outgrow}. Interacting agent teams often incur substantial coordination overhead \citep{khatua2026cooperbench}, while reported gains often fall below the replication noise floor \citep{kaliyev2026noisefloor}. These conflicting findings are hard to reconcile because comparisons rarely hold budgets, toolsets, or harnesses constant \citep{zhang2026harness}. By enforcing pre-call budget gates and uniform execution primitives, OpenCollab ensures observed performance gaps reflect organizational design rather than infrastructural confounders.

Beyond controlling runtime factors, evaluating collaboration requires confirming that the assigned organization actually materialized during execution. Analyzing interaction networks, \citet{destefanis2026whenagents} show that designated coordinator roles frequently fail to form communication hubs in practice. Related studies examine plan compliance \citep{liu2026planaction}, reversible execution traces \citep{yu2026shepherd}, and role separation \citep{kim2026teambench}, while \citet{kirgis2026loganalysis} emphasize that agent evaluations require log analysis. Yet existing frameworks treat execution as an unverified black box, reporting outcomes without verifying actual communication. OpenCollab bridges this gap via structured event streams and formal Adherence, providing the infrastructure to audit delivery and support causal attribution of organizational gains under explicit identification assumptions.

\section{Conclusion}
\label{sec:conclusion}

Evaluating multi-agent systems is challenging because existing frameworks entangle organizational design with execution mechanics. To address this, we present OpenCollab, a unified multi-agent coding platform that cleanly decouples design from execution. This decoupling allows researchers to easily construct diverse multi-agent systems in a unified manner on a shared substrate with identical runtime semantics. Furthermore, OpenCollab tracks runtime Adherence to verify that the configured collaboration actually occurs. Our findings also reveal that the agents collaborate very differently across configurations, so evaluations that do not verify collaboration cannot attribute their gains to it. Moreover, OpenCollab provides both the unified programmability to effortlessly build agent organizations and the controlled runtime required to attribute their effects causally under the explicit identification. 

\ificlrfinal
\subsubsection*{Acknowledgments}
We sincerely thank Fanjian Su, Lirui Guan, Peize Li, and Weiming Wang for their contributions to the early stages of this work.
\fi

\begingroup
\small
\bibliographystyle{plainnat}
\bibliography{references}
\endgroup

\clearpage
\appendix

\clearpage

\providecommand{\masE}{\mathbb{E}}
\providecommand{\masP}{\mathbb{P}}
\providecommand{\masITT}{\operatorname{ITT}}
\providecommand{\masCACE}{\operatorname{CACE}}
\theoremstyle{plain}
\newtheorem{masthm}{Theorem}[section]
\newtheorem{masprop}[masthm]{Proposition}

\section{Identification theory for the CACE}
\label{mas:appendix}

Terminology: Section~\ref{sec:oc-metrics} and Table~\ref{tab:itt-pp-cace} call the ratio $\masITT/\alpha_{\mathrm{adh}}$ CACE for brevity. The derivations below keep the two apart: the ratio is the Wald ratio (its sample value the Wald estimate), and CACE denotes the complier effect $\masE[\Delta\mid A=1]$ of Appendix~\ref{mas:setup}, which the main text calls the configuration-level complier effect.

Section 3.5 reports the Wald ratio $\masITT/\alpha_{\mathrm{adh}}$, which is
read as a complier average causal effect (CACE) only under the mean exclusion
restriction below. Here $\alpha_{\mathrm{adh}}$ is the probability
that a target Team run satisfies the full operational structural-adherence
rule. Every applicable axis must be explicitly judged adherent. The Single
arm cannot realize that target organization, giving
one-sided compliance. The ratio identifies the effect of the assigned Team
configuration relative to Single among fully adherent Team runs when the
non-adherent Team-versus-Single contrast has mean zero. Compatible paired
records can estimate that contrast; otherwise the exact discrepancy and a
sensitivity bound state what the ratio may miss.

\subsection{Setup and notation}
\label{mas:setup}

Using the notation of Section 3.5, write $Z_r\in\{\mathrm{Single},\mathrm{Workflow},
\mathrm{Team}\}$ for the assigned arm, $A_r\in\{0,1\}$ for the
realized-organization indicator, and retain outcome
$Y_r\in\{0,1\}$, role cost $C_s(r)$, and arm-specific structural adherence
$\mathrm{adh}(r)$.
Fix a target Team configuration $z$ and write $0$ for the Single baseline.
For a task and its pre-specified execution randomness, let $Y(z),Y(0)\in
\{0,1\}$ be the configuration-level potential outcomes under the same scoring
rule. Their joint law uses a specified coupling of task and run randomness;
their marginal means require only comparable arm distributions. We use the
same task weights in both arms and assume consistency of observed runs with
their assigned potential outcomes, isolated resets, stable model and runtime
versions, and complete scoring under a fixed rule. In particular, failures
and timeouts are not dropped according to their outcomes.

For a task and its pre-specified execution randomness, let $A(z)\in\{0,1\}$
denote the result of applying the operational structural-adherence rule of
Section 3.5 to the execution that would be generated under target Team
configuration $z$. With $v_x(z)$ denoting the operational axis verdict
generated under that configuration, define
\begin{equation}
 A(z):=\prod_{x\in\mathcal X(z)}
       \mathbf 1\{v_x(z)=\text{adherent}\}.
 \label{mas:eq:delivery}
\end{equation}
For an observed Team run $r$ assigned configuration $z$, consistency gives
the observed label $A_r=A(z)=\mathrm{adh}(r)$ for that run's task and
execution randomness.
An adherent verdict contributes one; a deviant or unverifiable verdict
contributes zero. Thus $A_r=1$ exactly when every applicable axis is explicitly
judged adherent, and $A_r=0$ otherwise. Every retained Team run has this label,
including runs with unverifiable axis verdicts. Write $A:=A(z)$ below and
define the population rate $\alpha_{\mathrm{adh}}:=\masP\{A(z)=1\}$.
The finite-sample rate $\widehat{\alpha}_{\mathrm{adh}}$ reported in Section 3.5
is the mean of the labels $A_r$ over the evaluated Team runs, the sample
analogue of this population rate; likewise $\widehat{\masITT}$ is the difference
of observed arm means, and $\widehat{\tau}_{\mathrm{Wald}}=\widehat{\masITT}/
\widehat{\alpha}_{\mathrm{adh}}$ is the sample Wald estimate.
Delegation is one component of the structural-adherence criterion. As
illustrated in Appendix~\ref{app:axes}, a run can delegate to one teammate while another
declared participant never acts; such a run has $A_r=0$ despite delegation.
The cost variables do not enter this outcome identification argument.

Define $A(0)=0$ only as a statement that a Single run cannot realize the
\emph{target Team} organization. It does not score Single's adherence to its
own applicable axes as zero; Single can be adherent to its own topology. The
pair $(A(0),A(z))$ is therefore either $(0,0)$ or $(0,1)$. This is
the one-sided-compliance, and hence monotonicity, property relevant to the
Wald argument \citep{bloom1984noshows,angrist1996iv}.

With $\Delta:=Y(z)-Y(0)$, define
\begin{align}
 \masITT&:=\masE[\Delta],
       \label{mas:eq:itt}\\
 \masCACE&:=\masE[\Delta\mid A=1]
       \quad(\alpha_{\mathrm{adh}}>0),
       \label{mas:eq:cace}\\
 \delta_0&:=\masE[\Delta\mid A=0]
       \quad(\alpha_{\mathrm{adh}}<1).
       \label{mas:eq:delta}
\end{align}
We refer to this configuration-level complier effect as CACE: the effect of
assigning the target Team configuration rather than Single among runs that
satisfy the full operational structural-adherence rule. It is not, without
further intervention assumptions, an isolated effect of coordination itself.
The contrast $\delta_0$ concerns the entire non-adherent stratum $A=0$;
non-delegating runs generally form only a subset of this stratum.

\subsection{CACE decomposition and Wald identification}
\label{mas:wald-section}

\begin{masthm}[Decomposition and identification of the CACE]
\label{mas:thm:wald}
Under the common consistency and stability conditions above, if
$0<\alpha_{\mathrm{adh}}<1$, then
\begin{equation}
 \masITT
 =\alpha_{\mathrm{adh}}\,\masCACE
 +(1-\alpha_{\mathrm{adh}})\,\delta_0.
 \label{mas:eq:decomposition}
\end{equation}
\begin{samepage}
The additional mean exclusion restriction is
\begin{equation}
 \delta_0=\masE[Y(z)-Y(0)\mid A=0]=0
 \label{mas:eq:exclusion}
\end{equation}
When it holds, the ratio identifies CACE:
\begin{equation}
 \masCACE=\frac{\masITT}{\alpha_{\mathrm{adh}}}.
 \label{mas:eq:wald}
\end{equation}
\end{samepage}
If $\alpha_{\mathrm{adh}}=1$, the same equality holds without a restriction
on the empty $A=0$ stratum. If $\alpha_{\mathrm{adh}}=0$, the CACE is
undefined.
\end{masthm}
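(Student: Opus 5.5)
The decomposition is just the law of total expectation applied to $\Delta=Y(z)-Y(0)$, partitioned on the binary label $A=A(z)$. So the plan is to establish \eqref{mas:eq:decomposition} first and then read off the identification claims.

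First I will check that every quantity is well defined. Under the consistency and stability conditions of Appendix~\ref{mas:setup}, the potential outcomes $Y(z),Y(0)\in\{0,1\}$ and the label $A(z)$ are random variables on one probability space: the specified coupling of task and execution randomness. Hence $\Delta\in\{-1,0,1\}$ is bounded, and $\masE[\Delta]$ exists. Also $A$ is binary with $\masP\{A=1\}=\alpha_{\mathrm{adh}}$. When $0<\alpha_{\mathrm{adh}}<1$, both strata $\{A=1\}$ and $\{A=0\}$ have positive probability. The elementary conditional expectations \eqref{mas:eq:cace} and \eqref{mas:eq:delta} are therefore defined.

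Next I write
\[
\Delta=\Delta\,\mathbf 1\{A=1\}+\Delta\,\mathbf 1\{A=0\}
\]
and take expectations. Using $\masE[\Delta\mathbf 1\{A=a\}]=\masP\{A=a\}\,\masE[\Delta\mid A=a]$, this gives \eqref{mas:eq:decomposition} directly.

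Under the mean exclusion restriction \eqref{mas:eq:exclusion}, the second term vanishes. Dividing by $\alpha_{\mathrm{adh}}>0$ then yields \eqref{mas:eq:wald}.

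For the boundary cases:
\begin{itemize}
\item If $\alpha_{\mathrm{adh}}=1$, then $\mathbf 1\{A=0\}=0$ almost surely. So $\masITT=\masE[\Delta\mathbf 1\{A=1\}]=\masCACE$, and no condition on the null stratum is needed; $\delta_0$ is simply left undefined there.
\item If $\alpha_{\mathrm{adh}}=0$, the conditioning event $\{A=1\}$ has probability zero. By definition \eqref{mas:eq:cace}, CACE is then undefined.
\end{itemize}

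One-sided compliance, $A(0)=0$, does not enter this algebra explicitly. Its role is interpretive: it makes $\{A(z)=1\}$ exactly the complier stratum, so that CACE is a complier effect in the sense of \citet{angrist1996iv}. I will note this in a remark rather than use it in the proof.

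The main obstacle is not the computation but keeping the conditioning variable legitimate. The stratum must be defined by the potential label $A(z)$, not by an observed post-treatment selection. Only then does $\masE[\Delta\mid A]$ make sense as a contrast between two potential outcomes for the same units. I will handle this by appealing to the joint coupling assumed in Appendix~\ref{mas:setup}. I will also stress that the identity is exact population algebra, while estimation by $\widehat\tau_{\mathrm{Wald}}$ is a separate matter.
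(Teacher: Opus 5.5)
Your proposal is correct and follows the paper's own proof: the paper also applies the law of total expectation to $\Delta$ conditioned on the binary label $A$, substitutes the mean exclusion restriction and divides by $\alpha_{\mathrm{adh}}>0$, and handles $\alpha_{\mathrm{adh}}=1$ by noting that $A=1$ almost surely. Your remark that one-sided compliance $A(0)=0$ does not enter the algebra, and only supports the complier interpretation, also matches the paper's discussion immediately after the proof.
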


\begin{proof}
The law of total expectation, applied to $\Delta$, gives
\begin{align*}
 \masE[\Delta]
 &=\masP(A=1)\masE[\Delta\mid A=1]
   +\masP(A=0)\masE[\Delta\mid A=0]\\
 &=\alpha_{\mathrm{adh}}\,\masCACE
   +(1-\alpha_{\mathrm{adh}})\,\delta_0.
\end{align*}
Substituting Eq.~\eqref{mas:eq:exclusion} and dividing by
$\alpha_{\mathrm{adh}}>0$ proves Eq.~\eqref{mas:eq:wald}. For
$\alpha_{\mathrm{adh}}=1$, $A=1$ almost surely, so
$\masITT=\masCACE$ directly.
\end{proof}

The decomposition follows simply by conditioning on $A(z)$ and does not use
$A(0)=0$. The latter condition supports the one-sided-compliance and
``complier'' interpretation; it does not itself imply the Wald ratio.
Mean exclusion is the additional conditional mean restriction on \emph{all}
target-Team runs with $A=0$, including those with unverifiable axis verdicts;
it does not require equal outcomes on every such run. Thus
Eq.~\eqref{mas:eq:wald} is the one-sided Wald identification result under
mean exclusion and the common conditions, not an unconditional algebraic
identity. The arm means identify $\masITT$ under the common conditions, and
the operational Team adherence labels identify $\alpha_{\mathrm{adh}}$;
the additional
restriction supplies the ratio's CACE interpretation.

\subsection{Paired diagnostic for the exclusion restriction}
\label{mas:paired}

The matched-task design can provide a stronger observation regime than the
two arm means. If the recorded tuple
$(I,A,Y(z),Y(0))$ is a compatible paired observation from the specified
joint law, where $I$ is the task, then
\begin{equation}
 \masCACE=\masE[\Delta\mid A=1],\qquad
 \delta_0=\masE[\Delta\mid A=0]
 \quad\text{when }\masP(A=0)>0.
 \label{mas:eq:paired}
\end{equation}
Hence the $A=0$ paired contrast directly estimates the contrast relevant to
mean exclusion, and the $A=1$ paired contrast estimates CACE without using
exclusion. A finite-sample estimate of the $A=0$ contrast provides evidence
about the restriction rather than establishing exact equality. This diagnostic
uses the full structural-adherence label and includes $A=0$ runs in which
some delegation occurs. Merely
sharing a task identifier does not fix the cross-arm dependence of stochastic
outcomes: the seed or another explicitly specified coupling protocol
(including the product coupling that makes the two arms' execution randomness
independent conditional on the task) must make the baseline outcome
in each tuple compatible with the same task-and-run coupling used in the
definitions above. An unpaired comparison of adherent Team outcomes with the
unconditional Single mean need not estimate Eq.~\eqref{mas:eq:cace}.

\subsection{Sensitivity to exclusion violations}
\label{mas:sensitivity-section}

\begin{masprop}[Exact discrepancy and sensitivity bounds]
\label{mas:prop:sensitivity}
For $0<\alpha_{\mathrm{adh}}<1$, without mean exclusion,
\begin{equation}
 \frac{\masITT}{\alpha_{\mathrm{adh}}}-\masCACE
 =\frac{1-\alpha_{\mathrm{adh}}}{\alpha_{\mathrm{adh}}}\,\delta_0.
 \label{mas:eq:discrepancy}
\end{equation}
If $|\delta_0|\leq D_{\max}$ for a nonnegative scalar $D_{\max}$, distinct
from the indicator $A(z)$, the absolute discrepancy is at most
$(1-\alpha_{\mathrm{adh}})D_{\max}/\alpha_{\mathrm{adh}}$. For
$\epsilon>0$, this bound is at most $\epsilon$ if and only if
\begin{equation}
 \alpha_{\mathrm{adh}}
 \geq\frac{D_{\max}}{D_{\max}+\epsilon}.
 \label{mas:eq:threshold}
\end{equation}
If instead $\delta_0\in[\ell_0,u_0]$, a valid sensitivity interval is
\begin{equation}
 \masCACE\in
 \left[
  \frac{\masITT-(1-\alpha_{\mathrm{adh}})u_0}
       {\alpha_{\mathrm{adh}}},
  \frac{\masITT-(1-\alpha_{\mathrm{adh}})\ell_0}
       {\alpha_{\mathrm{adh}}}
 \right]\cap[-1,1].
 \label{mas:eq:sensitivity}
\end{equation}
\end{masprop}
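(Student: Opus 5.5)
The plan is to derive every claim from the decomposition Eq.~\eqref{mas:eq:decomposition} of Theorem~\ref{mas:thm:wald}. That identity needs only $0<\alpha_{\mathrm{adh}}<1$ and the law of total expectation, not mean exclusion, so it is available here.

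First, I divide Eq.~\eqref{mas:eq:decomposition} by $\alpha_{\mathrm{adh}}>0$ to get $\masITT/\alpha_{\mathrm{adh}}=\masCACE+\frac{1-\alpha_{\mathrm{adh}}}{\alpha_{\mathrm{adh}}}\delta_0$. Subtracting $\masCACE$ gives Eq.~\eqref{mas:eq:discrepancy} exactly.

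Second, for the bound I take absolute values. Since $\alpha_{\mathrm{adh}}\in(0,1)$, the factor $(1-\alpha_{\mathrm{adh}})/\alpha_{\mathrm{adh}}$ is positive, so the discrepancy has absolute value at most $(1-\alpha_{\mathrm{adh}})D_{\max}/\alpha_{\mathrm{adh}}$.

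For the threshold, I rewrite $(1-\alpha)D_{\max}/\alpha\le\epsilon$ as $(1-\alpha)D_{\max}\le\epsilon\alpha$, using $\alpha>0$ to multiply through. This is equivalent to $D_{\max}\le\alpha(D_{\max}+\epsilon)$. Since $D_{\max}+\epsilon>0$, dividing by it yields Eq.~\eqref{mas:eq:threshold}. Each step is reversible because I only multiply or divide by positive quantities, which gives the ``if and only if''. The case $D_{\max}=0$ is covered automatically: the threshold becomes $0$, and the bound is $0\le\epsilon$.

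Third, for the interval I solve the discrepancy identity for $\masCACE$:
\[
\masCACE=\frac{\masITT-(1-\alpha_{\mathrm{adh}})\delta_0}{\alpha_{\mathrm{adh}}}.
\]
As a function of $\delta_0$ this is affine with slope $-(1-\alpha_{\mathrm{adh}})/\alpha_{\mathrm{adh}}<0$, so it is decreasing. Hence $\delta_0\in[\ell_0,u_0]$ maps into the interval whose lower endpoint comes from $u_0$ and whose upper endpoint comes from $\ell_0$, as stated.

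The intersection with $[-1,1]$ holds because $Y(z),Y(0)\in\{0,1\}$ forces $\Delta\in\{-1,0,1\}$. Therefore its conditional mean $\masCACE=\masE[\Delta\mid A=1]$ lies in $[-1,1]$ whenever $\alpha_{\mathrm{adh}}>0$.

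The main point needing care is the sign of the slope, since it determines which endpoint pairs with $u_0$ and which with $\ell_0$. Everything else is routine algebra once Theorem~\ref{mas:thm:wald}'s decomposition is invoked.
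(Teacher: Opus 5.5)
Your proposal is correct and follows the paper's proof essentially step for step: rearrange the decomposition of Theorem~\ref{mas:thm:wald}, take absolute values for the bound, solve the inequality for the threshold, and use that $\masCACE$ is decreasing in $\delta_0$, with binary outcomes giving the truncation to $[-1,1]$. Your explicit reversibility argument and your treatment of the $D_{\max}=0$ case fill in details the paper leaves implicit.
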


\begin{proof}
Rearrange Eq.~\eqref{mas:eq:decomposition} to obtain
Eq.~\eqref{mas:eq:discrepancy}. Taking absolute values gives the bound;
solving $(1-\alpha_{\mathrm{adh}})D_{\max}/\alpha_{\mathrm{adh}}
\leq\epsilon$ gives Eq.~\eqref{mas:eq:threshold}. Finally,
$\masCACE=\{\masITT-(1-\alpha_{\mathrm{adh}})\delta_0\}/
\alpha_{\mathrm{adh}}$ decreases with $\delta_0$, giving the interval.
Binary outcomes constrain the CACE to $[-1,1]$.
\end{proof}

The discrepancy in Eq.~\eqref{mas:eq:discrepancy} is a population
identification discrepancy, not the finite-sample bias of a random ratio.
The interval is valid but need not be sharp after all observed outcome
constraints are imposed. A non-delegating-only diagnostic does not bound
$\delta_0$ unless it also covers the remaining $A=0$ runs.

\subsection{Reporting requirements}
\label{mas:reporting}

Table~\ref{mas:tab:requirements} separates the deployment effect from the
two ways of learning about configuration-level CACE. All rows require the common scoring,
consistency, reset, stability, and task-distribution conditions stated above.

\begin{center}
\refstepcounter{table}\label{mas:tab:requirements}
\small
Table~\thetable: Observation regimes and requirements for the Section 3.5 effect.
\par\smallskip
\begin{tabularx}{\linewidth}{@{}>{\raggedright\arraybackslash}p{0.22\linewidth}>{\raggedright\arraybackslash}p{0.27\linewidth}>{\raggedright\arraybackslash}X@{}}
\toprule
Target & Observable input & Additional condition or diagnostic \\
\midrule
$\masITT=\masE[Y(z)-Y(0)]$
 & Task-weighted arm outcomes
 & Stable, complete execution identifies the deployment contrast. \\
$\masCACE=\masITT/\alpha_{\mathrm{adh}}$
 & Arm outcomes and target-Team labels $A_r$;
   $\alpha_{\mathrm{adh}}=\masP(A=1)$
 & $\alpha_{\mathrm{adh}}>0$, target-Team one-sided compliance $A(0)=0$,
   and mean exclusion $\delta_0=0$. \\
Paired CACE and diagnostic
 & Compatible $(I,A,Y(z),Y(0))$ tuples
 & Directly estimate $\masE[\Delta\mid A=1]$ and, when $A=0$ occurs,
   the exclusion-relevant contrast $\delta_0=\masE[\Delta\mid A=0]$. \\
\bottomrule
\end{tabularx}
\end{center}

A report should retain the task, assigned arm and configuration, outcome,
per-axis adherence verdicts (including unverifiable), and the seed or coupling
protocol used for any pairing. It should state whether the CACE was obtained
from the Wald ratio under mean exclusion or directly from compatible pairs.
These are statements about the assigned configuration's effect among Team
runs that satisfy the full operational structural-adherence rule, not a
separate coordination-only estimand.

\subsection{Empirical check on the adherence experiment}
\label{mas:empirical}

Table~\ref{mas:tab:empirical} computes the two paired contrasts of
Appendix~\ref{mas:paired} for the four configurations of
Table~\ref{tab:itt-pp-cace}. Each configuration is paired with the Single
agent on the same 36 SWE-bench Pro tasks, and $A_r$ is the full
structural-adherence label of that table, with failed and unverifiable runs
at $A_r = 0$. For this analysis we take the cross-arm coupling of
Appendix~\ref{mas:setup} to be the product coupling conditional on the task:
conditional on the task $I$, the Team and Single execution randomness are
independent draws. The task-matched tuples in Table~\ref{mas:tab:empirical}
are interpreted under this specified coupling; they do not rely on a shared
execution seed. Under it, a Team run's $A_r$ is independent of the Single
run's outcome on the same task.

\begin{table}[h]
    \caption{Wald ratio and paired contrasts against the Single agent on SWE-bench Pro.
    Paired, $A_r=1$ is the mean Team-minus-Single difference over the tasks whose
    Team run is adherent, the direct estimate of Eq.~\eqref{mas:eq:paired} under the product coupling above;
    Paired, $A_r=0$ is the same difference over the remaining tasks, the
    estimate of $\delta_0$. By construction
    $\widehat{\mathrm{ITT}} = \widehat{\alpha}_{\mathrm{adh}}\cdot(\text{Paired}, A_r{=}1) + (1-\widehat{\alpha}_{\mathrm{adh}})\cdot(\text{Paired}, A_r{=}0)$.
    Differences in percentage points.}
    \label{mas:tab:empirical}
    \centering
    \small
    \begin{tabular}{l c c c c c c}
        \toprule
        Configuration & Adherence (\%) & Tasks $A_r{=}1$ / $0$ & ITT & Wald & Paired, $A_r{=}1$ & Paired, $A_r{=}0$ \\
        \midrule
        Open card         & 47.2 & 17 / 19 & $+5.6$  & $+11.8$ & $-5.9$  & $+15.8$ \\
        Budget stated, 2M & 75.0 & 27 / 9  & $0.0$   & $0.0$   & $-3.7$  & $+11.1$ \\
        Mandatory card    & 91.7 & 33 / 3  & $+2.8$  & $+3.0$  & $+3.0$  & $0.0$   \\
        Read-only         & 94.4 & 34 / 2  & $-16.7$ & $-17.6$ & $-14.7$ & $-50.0$ \\
        \bottomrule
    \end{tabular}
\end{table}

The gap between the Wald ratio and the paired estimate over adherent runs
narrows as Adherence rises: 17.6 points at 47.2\%, 3.7 at 75.0\%, and at most
2.9 above 90\%, as the factor $(1-\alpha_{\mathrm{adh}})/\alpha_{\mathrm{adh}}$ of
Proposition~\ref{mas:prop:sensitivity} predicts. Under the Mandatory card the
three non-adherent tasks show no difference and the two estimates coincide.
Under the Open card the $+5.6$ ITT arises on the 19 non-adherent tasks, while
the 17 adherent tasks show $-5.9$; the Wald ratio of $+11.8$ therefore should
not be read as the configuration-level CACE of this configuration.

These estimates are descriptive. Re-running the Single agent unchanged flips
the outcome of 5 of the 36 tasks (Appendix~\ref{app:protocol-setup}), and
among the 17 adherent Open-card tasks the two arms disagree on one task
only, so the sign of $-5.9$ is not established. The table supports the
direction of the Wald--paired gap across Adherence levels, not a
collaboration effect of any single configuration.

\section{Supplementary Experiments}
\label{app:part-experiments}

This part expands Section~\ref{sec:experiment}: the setup of its three
experiments (\ref{app:protocol}), every result table (\ref{app:results}), how
runs end and what they cost (\ref{app:runs}), traced runs
(\ref{app:case-traces}), and the limitations (\ref{app:limits}). Appendix~\ref{mas:appendix} gives the
identification theory behind the Wald estimate of Section~\ref{sec:oc-metrics}, and
Appendix~\ref{app:part-system} describes the system.

\subsection{Experimental Setup}
\label{app:protocol}

Section~\ref{sec:experiment} reports three experiments on one runtime. The
adherence experiment changes one setting of a team at a time and records
whether the declared organization happened (Table~\ref{tab:five-factors}).
The cross-harness comparison runs OpenCollab and three external harnesses on
three benchmarks (Table~\ref{tab:benchmark-context}). The Duo--single
comparison runs a team whose handoffs are fixed by code against a single
agent on the same tasks. This subsection gives the task sets, the
organizations, the shared settings and the metrics, and then the design of
each experiment.

\paragraph{Benchmarks and task sets.} The cross-harness comparison uses 193
tasks of SWE-bench Pro \citep{deng2025swebenchpro}, 89 tasks of
Terminal-Bench 2.1 \citep{merrill2026terminal} and 113 tasks of DeepSWE
\citep{huang2026deepswe}. The adherence experiment uses 36 SWE-bench Pro
tasks drawn at random, and every configuration runs on the same 36 tasks. The
Duo--single comparison uses the task sets of the cross-harness comparison and
is paired task by task on Terminal-Bench 2.1 and DeepSWE.

A SWE-bench Pro run is resolved when every fail-to-pass test passes and no
pass-to-pass test breaks, as judged by the benchmark's official evaluator.
A run whose tests produce no result counts as unresolved. DeepSWE is graded
the same way on each task's fail-to-pass and pass-to-pass tests. A
Terminal-Bench 2.1 run is graded by the task's own verifier, which checks
the files, program output or running service in the candidate's container;
it passes when the verifier returns reward 1. When grading itself failed for
an infrastructure reason (a download, a dependency, the verifier not
starting), the same candidate was graded again, with no new model call. Nine
Claude Code runs that did not end were stopped by hand and scored as failed:
eight on DeepSWE and \texttt{extract-elf} on Terminal-Bench 2.1; their
records keep the verifier's missing reward.

\paragraph{Organizations.} The \emph{single agent}, OpenCollab (Base) in
Table~\ref{tab:benchmark-context}, is one agent with six tools (\texttt{bash},
\texttt{file\_read}, \texttt{file\_write}, \texttt{apply\_patch},
\texttt{grep}, \texttt{git\_diff}) and no teammate. A team runs under one of
two controllers (Appendix~\ref{app:arch}). In a \emph{Workflow} code issues
every handoff. In a \emph{Team} the model decides whether to brief a
teammate, whom, and how much to do itself. \emph{Duo}, OpenCollab (Duo) of
Table~\ref{tab:benchmark-context}, is a Workflow: two Coders each produce a
candidate fix and code selects one (Appendix~\ref{app:exp-duo}). The
adherence experiment runs two Team rosters, named by their topology. In
\emph{Star} an \emph{Adopter} receives the task, may brief two Coders, and
chooses and submits the final patch. Each Coder works in its own copy of the
repository and reports back with a commit. Whether the Coders run is
therefore the Adopter's decision, which is what adherence records.
\emph{Ring} has an \emph{analyst} that receives the task and submits, a
\emph{coder} and a \emph{tester}. Every agent starts from the single
agent's configuration and adds its role card. Figure~\ref{fig:topologies}
draws Duo and the three Team topologies.

\paragraph{Shared settings.} In the adherence experiment every agent runs
with reasoning effort at its highest setting and a streamed response, a limit
of 200 steps, and the runtime's reminder to edit after repeated reads switched
off. Each agent
has its own token allowance of 2M unless a Budget row changes it, and the
runtime refuses a call that would exceed it (Appendix~\ref{app:arch}). A run
is stopped after 5{,}400\,s of wall clock with Qwen3.8-Flash and after
7{,}200\,s with DeepSeek-V4.1-Flash and GPT-5.6-Luna. In this experiment the
container in which the agents run has no network access, so an agent cannot
fetch a fix from outside the repository; only the runtime, outside the
container, reaches the model endpoint.

\paragraph{Which runs are kept.}
\label{app:protocol-setup}
A run that ends for a reason the model caused stays in every denominator: it
reached its allowance or the wall clock, produced no patch, or failed the
tests. Its patch is graded as it stands. A run lost to an infrastructure
fault is launched again on the same task at the same code. A run that still
has no result counts as unverified for Adherence and as not resolved for
Pass@1, so every configuration is read over its 36 runs.

\paragraph{Metrics.} A run is \emph{adherent} when it is adherent on every
one of the six axes of Table~\ref{tab:adherence} that applies to its arm. In
these experiments the runtime holds four of them (role boundary, budget
sharing, information flow and context policy), so only participation and
delegation vary (Appendix~\ref{app:axes} gives the reasons). A Team
run is therefore adherent when every teammate, every agent other than the
entry agent, did work: each spent tokens and produced at least one model output. The
teammates are the two Coders in Star and the coder and the tester in Ring, so
the two rosters count different agents. A run is
\emph{unverified} when the entry agent had already begun to delegate but the
run was cut off before every teammate had worked, by the entry agent's token
allowance, by the wall clock, or by a teammate that stopped without replying,
or when it failed after being launched again. Such a run might have
collaborated had it not been cut off, so the log cannot settle it. A run in
which the entry agent never delegated is not unverified. Unverified runs count as not adherent ($A_r = 0$,
Appendix~\ref{mas:setup}); counting every unverified run as adherent instead
shows how much Adherence depends on this rule.
Pass@1 is the share of the 36 runs that are resolved, and Tokens is the median per run over all agents. Intervals are
Clopper--Pearson 95\%. Two configurations are compared task by task on all
36 tasks, with an exact two-sided sign test and no correction for
multiple comparisons. Re-running the single agent unchanged flips the outcome
of 5 of the 36 tasks, so a difference of a few tasks between two rows is
within run-to-run variation.

\subsubsection{The adherence experiment}
\label{app:five-dims}

Table~\ref{tab:configs} lists every configuration of this experiment with the
code used below.

\begin{table}[t]
\caption{The configurations of the adherence experiment. Each row but S$'$ is a
row of Table~\ref{tab:five-factors}, or the single agent it is compared with in
Table~\ref{tab:itt-pp-cace}; S$'$ reruns S unchanged. Every agent has a 2M-token
allowance except in B1 and B3. Each configuration runs the 36 tasks once. S and
S$'$ are the single agent; Star has an Adopter and two Coders,
and both rings use the Ring roster (an analyst, a coder and a tester).}
\label{tab:configs}
\centering
\footnotesize
\setlength{\tabcolsep}{4pt}
\begin{tabular}{@{}llllll@{}}
\toprule
\textbf{Code} & \textbf{Table 2 row} & \textbf{Card} & \textbf{Model} & \textbf{Entry tools} & \textbf{Topology} \\
\midrule
S & -- & -- & Qwen3.8-Flash & all & -- \\
S$'$ & -- & -- & Qwen3.8-Flash & all & -- \\
R & reference & Open & Qwen3.8-Flash & all & Star \\
M1 & Model & Open & DeepSeek-V4.1-Flash & all & Star \\
M2 & Model & Open & GPT-5.6-Luna & all & Star \\
T1 & Tools & Open & Qwen3.8-Flash & no edit & Star \\
T2 & Tools & Open & Qwen3.8-Flash & read-only & Star \\
B1 & Budget & Open + 0.5M stated & Qwen3.8-Flash & all & Star \\
B2 & Budget & Open + 2M stated & Qwen3.8-Flash & all & Star \\
B3 & Budget & Open + 4M stated & Qwen3.8-Flash & all & Star \\
C1 & Context & Optional & Qwen3.8-Flash & all & Star \\
C2 & Context & Mandatory & Qwen3.8-Flash & all & Star \\
G1 & Topology & Mandatory & Qwen3.8-Flash & all & Ring + shortcut \\
G2 & Topology & Mandatory & Qwen3.8-Flash & all & Ring \\
\bottomrule
\end{tabular}
\end{table}

Every row of Table~\ref{tab:five-factors} is one configuration run once on
each of the 36 tasks, twelve configurations in all. Every row except the two
rings changes one setting of a single reference team: Star with
Qwen3.8-Flash, the Open card, all tools, and a 2M-token allowance per agent
that the card does not mention. The single agent runs the same 36 tasks
twice, the second time unchanged, to measure run-to-run variation.

\paragraph{Model.} The same team with DeepSeek-V4.1-Flash or GPT-5.6-Luna in
every role. GPT-5.6-Luna is served through a different endpoint from the
other two models, so its row changes the endpoint as well as the model.

\paragraph{Tools.} \emph{All tools} is the reference. \emph{No edit tools}
removes the Adopter's two file-editing tools (file write and apply patch); it
keeps the shell. \emph{Read-only} leaves the Adopter only its reading tools and
\texttt{adopt}, which checks out a Coder's commit into the Adopter's own tree.
The Coders' tools are unchanged.

\paragraph{Budget.} Each row adds two sentences to the Adopter's card and sets
every agent's allowance to the stated value: ``Your budget for this run is
$X$ tokens, and each Coder's is the same. On these tasks, a single agent
working alone used 1,320,000 tokens on average (median 1,470,000).'' The 2M
row therefore differs from the reference only in those two sentences.

\paragraph{Context: how the three cards are written.} Every Adopter card has
the same body: the role, the tools, the two Coders and what each is asked
for, and how a run ends. The three cards differ only in the closing section,
which Table~\ref{tab:cards} gives in full. The \emph{Open} card leaves every
choice to the Adopter and says that neither keeping the work nor handing it
over is expected. The \emph{Mandatory} card opens by saying that the division
of the work is not the Adopter's to decide, and then gives one procedure:
implementation belongs to the Coders, who are asked one at a time, the second
after the first has answered, and the Adopter chooses which candidate to
deliver. The \emph{Optional} card gives the same procedure without that
opening sentence, and adds one sentence that lets the Adopter depart from it
and implement the fix itself. The three cards thus form a ladder: no
instruction, an instruction the Adopter may set aside, and an instruction it
may not. Because the second Coder is briefed after the first has answered,
the two candidates are not independent draws. In the Optional row a Coder
that stopped without replying did not notify the Adopter, so an Adopter
waiting for that reply could end the run; its Adherence and Pass@1 are
therefore not strictly comparable with the other rows.

\begin{table}[t]
\caption{The closing section of the Adopter's card, the only text that differs
between the Open, Optional and Mandatory cards; the rest of the card is
identical. Text the Optional and Mandatory cards share is in gray; what
distinguishes each card is in black.}
\label{tab:cards}
\centering
\footnotesize
\begin{tabularx}{\textwidth}{@{}lX@{}}
\toprule
\textbf{Card} & \textbf{Closing section} \\
\midrule
Open & \textbf{What is yours to judge.} Nothing above tells you what order to do
things in, whom to talk to, or how much of the work to do yourself. Keeping all
of it and handing parts of it over are both open to you, and neither one is
what you are expected to do. Those are your calls, and you make them the way
you would judge any piece of work: what the request actually needs, what is
worth another agent's attention, what it costs to describe a piece of it well
enough to hand over, and what it costs to carry all of it in one budget and one
conversation. \\
\midrule
Optional & \textcolor{gray}{\textbf{What this run asks of you.} Implementation is the Coders'. There are two of them and neither can see what the other is doing, so ask them one at a time: send the first a message with \texttt{message\_agent} describing the problem and let it work out its own fix, and when its answer is back, write the second its own brief and let it work out its own. What goes into that second brief is yours -- what you learned from the first is yours to use or to leave out. Which of the two this run delivers is yours to choose.} \textbf{You may depart from this and do the
implementation yourself; that is your call, and there are runs where it is the
better one.} \textcolor{gray}{Use \texttt{team\_status} first if you want to see who is live. Everything else about how you work -- what you read, how you diagnose, what you put in each message -- is still yours.} \\
\midrule
Mandatory & \textcolor{gray}{\textbf{What this run asks of you.}} \textbf{This run is not asking you to decide how to
divide the work. It is asking you to divide it a particular way, so do that.}
\textcolor{gray}{Implementation is the Coders'. There are two of them and neither can see what the other is doing, so ask them one at a time: send the first a message with \texttt{message\_agent} describing the problem and let it work out its own fix, and when its answer is back, write the second its own brief and let it work out its own. What goes into that second brief is yours -- what you learned from the first is yours to use or to leave out. Which of the two this run delivers is yours to choose. Use \texttt{team\_status} first if you want to see who is live. Everything else about how you work -- what you read, how you diagnose, what you put in each message -- is still yours.} \\
\bottomrule
\end{tabularx}
\end{table}

\paragraph{Topology.} The three rows differ in who can message whom
(Figure~\ref{fig:topologies}). \emph{Star} is the reference team: the Adopter
and each Coder can message each other, and the two Coders cannot message each
other. The two rings use the Ring roster, each agent with the
reference's tools. The analyst's card opens with the same two sentences as
the Mandatory card and then asks the analyst to leave the implementation to
the coder. In the \emph{Ring}, the analyst can message only the coder, the
coder only the tester and the tester only the analyst. \emph{Ring + shortcut} adds one
edge, from the coder to the analyst, so the coder may report directly and
skip the tester.

\begin{figure}[h]
\centering
\tikzset{topo/.style={font=\footnotesize,inner sep=2pt,line width=0.5pt,
  >={Stealth[length=4pt,width=3.5pt]}}}
\resizebox{\linewidth}{!}{%
\begin{tikzpicture}[topo]
\node(a) at (1.2,0){code}; \node(x) at (0,-1){Coder A}; \node(y) at (2.4,-1){Coder B};
\node(j) at (3.9,0){Adjudicator};
\draw[<->](a)--(x); \draw[<->](a)--(y); \draw[<->,dashed](a)--(j);
\node at (1.2,-1.7){Duo, Workflow};
\end{tikzpicture}\hspace{1cm}
\begin{tikzpicture}[topo]
\node(a) at (1.2,0){Adopter}; \node(x) at (0,-1){Coder A}; \node(y) at (2.4,-1){Coder B};
\draw[<->](a)--(x); \draw[<->](a)--(y);
\node at (1.2,-1.7){Star};
\end{tikzpicture}\hspace{1cm}
\begin{tikzpicture}[topo]
\node(n) at (1.2,0){analyst}; \node(c) at (0,-1){coder}; \node(t) at (2.4,-1){tester};
\draw[<->](n)--(c); \draw[->](c)--(t); \draw[->](t)--(n);
\node at (1.2,-1.7){Ring + shortcut};
\end{tikzpicture}\hspace{1cm}
\begin{tikzpicture}[topo]
\node(n) at (1.2,0){analyst}; \node(c) at (0,-1){coder}; \node(t) at (2.4,-1){tester};
\draw[->](n)--(c); \draw[->](c)--(t); \draw[->](t)--(n);
\node at (1.2,-1.7){Ring};
\end{tikzpicture}}
\caption{The four organizations. Left: Duo, a Workflow, OpenCollab (Duo) of
Table~\ref{tab:benchmark-context}; every arrow is a handoff that code makes
on every run, and the dashed one runs only when the selection rule does not
decide. The other three are the topologies of Table~\ref{tab:five-factors},
run as Teams: an arrow means the agent at its tail can message the agent at
its head, and whether it does is the model's choice.}
\label{fig:topologies}
\end{figure}

\subsubsection{The cross-harness comparison}
\label{app:exp-harness}

This comparison asks whether OpenCollab is a competitive harness, so that
what the adherence experiment finds is not a property of a weak one. Five
harnesses run on the three benchmarks, all with GPT-5.6-Luna: mini-swe-agent,
Codex CLI, Claude Code, OpenCollab (Base) and OpenCollab (Duo).

\paragraph{Network access and answer leakage.} On Terminal-Bench 2.1 the
runner gives a task internet access when the task's own
\texttt{allow\_internet} flag allows it, and no network otherwise. The
trajectories of OpenCollab (Base), OpenCollab (Duo), Claude Code and mini-swe-agent contain
successful fetches of external content, so these harnesses did reach the
internet on some tasks; for Codex CLI we found no fetch of reference
material. Ten generation attempts, on nine harness--task pairs, received
material that gives away the answer; all ten are on
Terminal-Bench 2.1. We found them by searching the trajectories for paths
into the benchmark's repository, \texttt{solution} and \texttt{tests}
directories, reference implementations and reference values, and then
re-reading each hit: the tool call, the full text it returned, what the
model did next, and whether the resulting candidate was selected. An attempt
counts only if the model received the material; a 404, a failed fetch,
ordinary project documentation or a search result showing only titles does
not count. Most affected tasks were run again from the
original task statement and base image in a fresh session, with the old
candidate, the old session and any grading feedback kept out, and with a
filter on the source and content of what a tool returns; ordinary dependency
downloads stayed allowed, so this is not a rerun with the network off.
Table~\ref{tab:benchmark-context} reports the results after this handling: every pass
obtained with leaked material was withdrawn or replaced.

\subsubsection{The Duo--single comparison}
\label{app:exp-duo}

The adherence experiment shows that a Team's handoffs may not happen, so a
difference in Pass@1 between a Team and a single agent cannot be read as the
effect of the organization. OpenCollab (Duo) is a Workflow: code issues every
handoff, so both of its Coders run on every task.

\paragraph{The workflow.} OpenCollab (Duo) has two solving agents and a judge that is
called only when needed. \emph{Coder A} looks for the narrowest root cause
and writes a minimal complete fix. \emph{Coder B} works from
the same original task and aims at completeness across components, APIs,
lifecycles and boundary conditions; it may receive the public test commands
that Coder A ran. Each works in its own candidate environment, with the
single agent's configuration and six tools. The code then selects between
the two candidates by a fixed rule; when the rule does not decide, it calls
a read-only \emph{Adjudicator}, which compares the candidates' evidence with
the public requirements of the task (on Terminal-Bench 2.1 it also has a
read-only tool that pages through that evidence). The code, not an agent,
applies the selected candidate.

\paragraph{Paired results.} Table~\ref{tab:duo-pairs} pairs OpenCollab (Duo) with OpenCollab
(Base) task by task, with an exact two-sided sign test on the tasks where
only one of them passed. The difference is significant on DeepSWE and not on
Terminal-Bench 2.1.

\begin{table}[t]
\caption{OpenCollab (Duo) against OpenCollab (Base), paired by task. $p$ is the exact
two-sided sign test on the two middle columns.}
\label{tab:duo-pairs}
\centering
\footnotesize
\begin{tabular}{@{}lrrrrr@{}}
\toprule
\textbf{Benchmark} & \textbf{Both pass} & \textbf{Only Duo} & \textbf{Only Base} & \textbf{Both fail} & $p$ \\
\midrule
Terminal-Bench 2.1 (89) & 68 & 6 & 3 & 12 & 0.51 \\
DeepSWE (113) & 58 & 21 & 5 & 29 & 0.0025 \\
\bottomrule
\end{tabular}
\end{table}

\subsection{Full Results}
\label{app:results}
\label{app:results-tables}

Table~\ref{tab:stats} gives, for every configuration of
Table~\ref{tab:configs}, the rates behind Table~\ref{tab:five-factors} and the
task-by-task comparisons behind its reading. A comparison $a{:}b$ counts the
tasks on which only this configuration succeeded ($a$) and those on which only
the other one did ($b$), over the 36 tasks, with the exact two-sided
sign test in parentheses. The reference is R for every Star row and G1 for G2.
C1 is not compared, for the reason given under Context in
Appendix~\ref{app:five-dims}. The single-agent column is left empty for M1 and M2, whose models have no
single-agent run.

\begin{table}[t]
\caption{Adherence and Pass@1 of every configuration, with paired comparisons
(only this configuration succeeded : only the other did, exact sign test $p$).
Adherence for a Star run counts both Coders; for a Ring run, the coder and the
tester.}
\label{tab:stats}
\centering
\footnotesize
\setlength{\tabcolsep}{4pt}
\begin{tabular}{@{}lcccccc@{}}
\toprule
& \multicolumn{3}{c}{\textbf{Adherence}} & \multicolumn{3}{c}{\textbf{Pass@1}} \\
\cmidrule(lr){2-4}\cmidrule(lr){5-7}
\textbf{Code} & \textbf{\% [95\% CI]} & \textbf{Unverified \%} & \textbf{vs reference} & \textbf{\%} & \textbf{vs reference} & \textbf{vs S} \\
\midrule
S & -- & -- & -- & 69.4 & -- & -- \\
S$'$ & -- & -- & -- & 66.7 & 2:3 (1.00) & -- \\
R & 47.2 [30.4, 64.5] & 13.9 & -- & 75.0 & -- & 3:1 (0.63) \\
M1 & 66.7 [49.0, 81.4] & 22.2 & 13:6 (0.17) & 72.2 & 2:3 (1.00) & -- \\
M2 & 97.2 [85.5, 99.9] & 2.8 & 19:1 ($<$0.001) & 61.1 & 2:7 (0.18) & -- \\
T1 & 86.1 [70.5, 95.3] & 0.0 & 15:1 ($<$0.001) & 66.7 & 1:4 (0.38) & 1:2 (1.00) \\
T2 & 94.4 [81.3, 99.3] & 2.8 & 18:1 ($<$0.001) & 52.8 & 0:8 (0.008) & 2:8 (0.11) \\
B1 & 47.2 [30.4, 64.5] & 2.8 & 9:9 (1.00) & 44.4 & 0:11 (0.001) & 0:9 (0.004) \\
B2 & 75.0 [57.8, 87.9] & 8.3 & 14:4 (0.03) & 69.4 & 1:3 (0.63) & 3:3 (1.00) \\
B3 & 77.8 [60.8, 89.9] & 2.8 & 13:2 (0.007) & 75.0 & 1:1 (1.00) & 3:1 (0.63) \\
C1 & 63.9 [46.2, 79.2] & 16.7 & -- & 61.1 & -- & -- \\
C2 & 91.7 [77.5, 98.2] & 8.3 & 19:3 ($<$0.001) & 72.2 & 1:2 (1.00) & 1:0 (1.00) \\
G1 & 66.7 [49.0, 81.4] & 13.9 & -- & 66.7 & -- & 2:3 (1.00) \\
G2 & 91.7 [77.5, 98.2] & 8.3 & 10:1 (0.01) & 63.9 & 3:4 (1.00) & 4:6 (0.75) \\
\bottomrule
\end{tabular}
\end{table}

\subsection{How Runs End and What They Cost}
\label{app:runs}

This section covers the Qwen3.8-Flash configurations of
Table~\ref{tab:configs}, that is, every configuration except M1 and M2. A run
ends in one of three ways (Figure~\ref{fig:runs-ends}). It completes when the
entry agent (the Adopter, or the analyst in a Ring) submits. It is stopped
for its token budget when the entry agent has used up its own allowance
before submitting. It is stopped for wall clock when it
reaches the 5{,}400\,s limit. Budget stops dominate only where the allowance
is smallest: in B1 92\% of runs stop for budget and 8\% complete. With the
largest allowance, B3, a third of the runs reach the wall clock instead. T2
completes most often, in 89\% of its runs.

Figures~\ref{fig:runs-tokens} and~\ref{fig:runs-duration} show what a run
costs in tokens and in time, and how a team run's cost splits across its three
agents.
A single-agent run uses 1.3M to 1.4M tokens on average against an allowance of
2M. A team run uses 2.9M to 4.2M against 6M, except in the two Budget rows
whose allowance differs: B1 uses 0.9M of 1.5M and B3 4.7M of 12M.
The entry agent takes 33\% to 50\% of a team's tokens, except in T2, where it
can only read and adopt a commit and takes 17\%. Because exactly one agent
runs at any moment (Appendix~\ref{app:limits}), a run's wall clock divides
into the time each agent spends in its own model calls and tool calls, plus
the harness's time between turns, which stays below 2\%. The time split
follows the token split: the entry agent's share ranges from 27\% in T2 to
59\% in B1. A single-agent run takes 28 to 30 minutes on average and a team
run 44 to 59 minutes, except in B1 (20 minutes).

Figure~\ref{fig:runs-seat-time} shows when each agent spends its tokens. Each run is cut
into ten equal slices of time, from its start to its last model call, and the
tokens in each slice are divided among the agents. The entry agent spends
nearly all tokens in the first slices, the other two agents take over in the
middle, and in every configuration the entry agent's share rises again in the
last slice. In B1 the entry agent
spends all the tokens of the first three slices; in T2 it spends almost nothing in the
middle of the run.

\begin{figure}[!htbp]
\centering
\includegraphics[width=\linewidth]{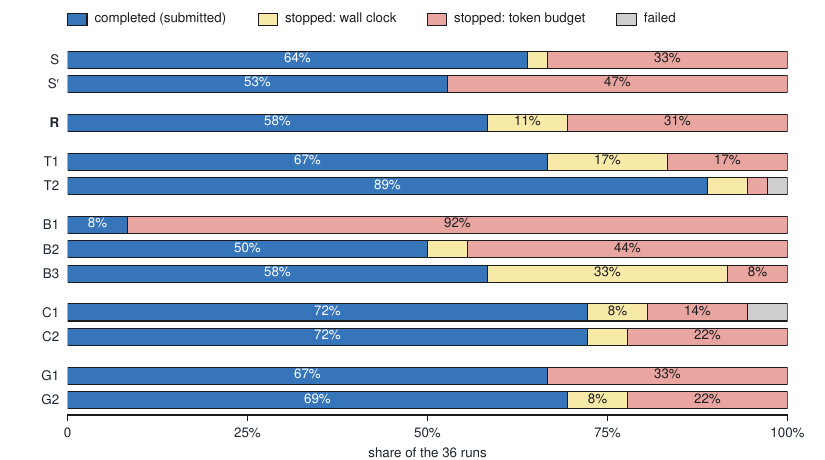}
\caption{How runs end, as a share of each configuration's 36 runs
(Appendix~\ref{app:protocol-setup}).
Completed: the entry agent submitted. Token budget: the entry agent used up
its allowance before submitting. Wall clock: the run reached 5{,}400\,s.
Failed: no result after the run was launched again.
Configuration codes as in Table~\ref{tab:configs}.}
\label{fig:runs-ends}
\end{figure}

\begin{figure}[!htbp]
\centering
\includegraphics[width=\linewidth]{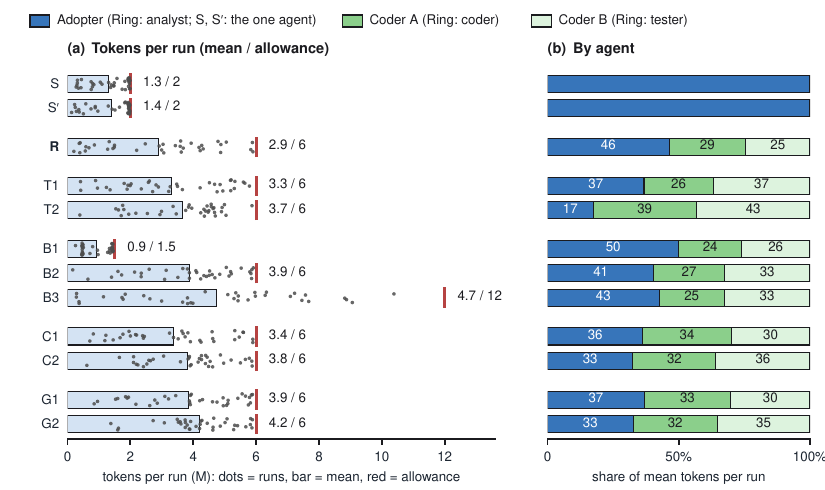}
\caption{Tokens per run. (a) Each dot is a run; the bar is the mean and the
red tick is the run's allowance (each agent's allowance times the number of
agents), with ``mean / allowance'' in millions at the right. (b) The mean
split across agents, in percent. In a Ring the three agents are the analyst,
the coder and the tester.}
\label{fig:runs-tokens}
\end{figure}

\begin{figure}[!htbp]
\centering
\includegraphics[width=\linewidth]{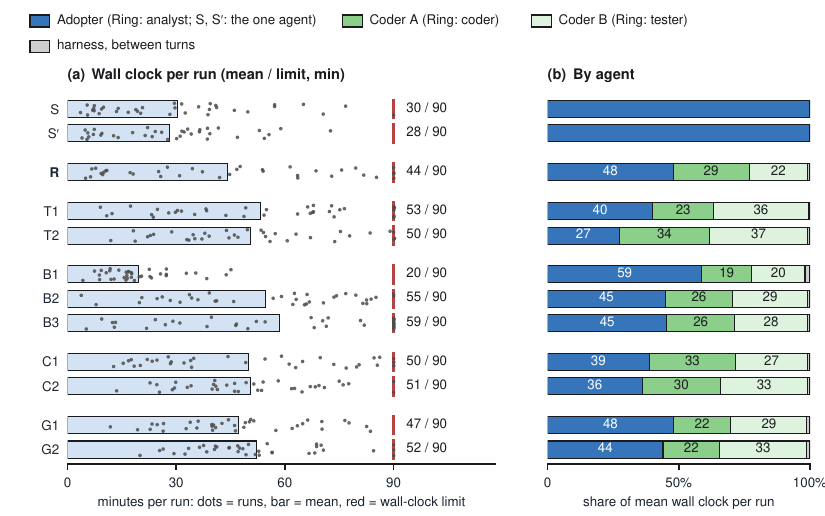}
\caption{Wall clock per run. (a) Each dot is a run; the bar is the mean and
the red tick is the 5{,}400\,s limit, with ``mean / limit'' in minutes at
the right. (b) The mean split across agents, where an agent's time is the
duration of its own model calls and tool calls; grey is the harness's time
between turns, a thin sliver at the right of each bar.}
\label{fig:runs-duration}
\end{figure}

\begin{figure}[!htbp]
\centering
\includegraphics[width=\linewidth]{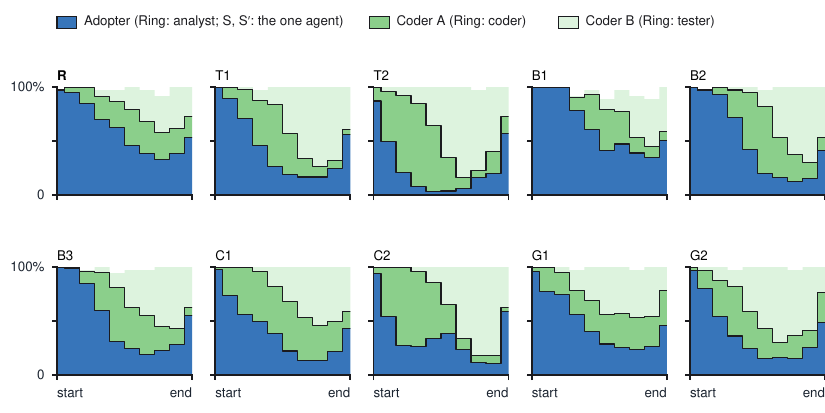}
\caption{Share of tokens by agent in each tenth of a run, from its start to
its last model call, averaged over the configuration's runs. One panel per
team configuration of Table~\ref{tab:configs} with Qwen3.8-Flash.}
\label{fig:runs-seat-time}
\end{figure}

\subsection{Three Runs, Traced}
\label{app:case-traces}

Figures~\ref{fig:case-trace-1}--\ref{fig:case-trace-3} draw three runs of the adopter--two-coder team from their event streams. Each lane is one agent; an outlined bar is one turn, and a block inside it is one model call, coloured by what the call did. Only one agent runs at a time, so a message to another agent is drawn as a vertical arrow at the moment it is sent, and the dotted line after it is the time the woken agent waits before its turn starts. Labelled points mark moments checked against the trajectory, and the bottom row of each panel is the code that would be graded if the run stopped at that moment; the legend sits above Figure~\ref{fig:case-trace-1}, and all three share one page. Each is a single run chosen to show one way the declared organization is or is not realized, so none of them estimates how often that happens, and their time axes differ.

\makeatletter\setlength{\@fptop}{0pt}\makeatother
\begin{figure}[p]
	\centering
	\includegraphics[width=\textwidth]{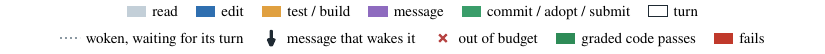}\\[6pt]
	\includegraphics[width=\textwidth]{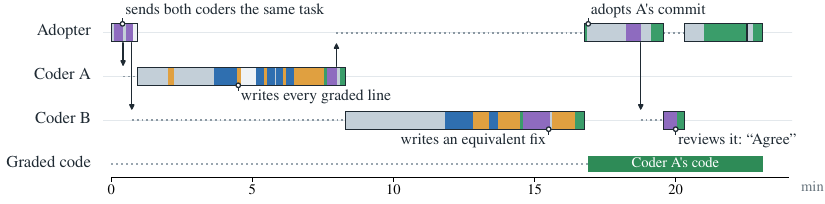}
	\caption{Read-only variant (the adopter can only read files and adopt a commit), resolved: two coders fix the task and the adopter adopts Coder A's commit. A SWE-bench Pro task from the teleport repository.}
	\label{fig:case-trace-1}
	\vspace{6pt}
	\includegraphics[width=\textwidth]{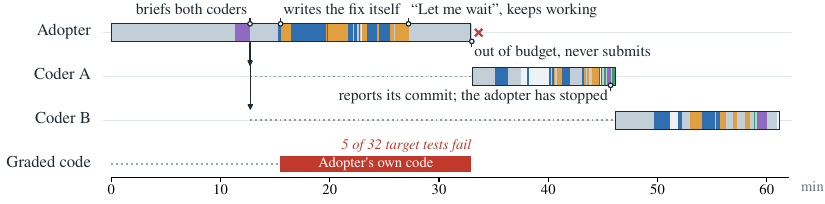}
	\caption{Reference team, unresolved: the adopter briefs both coders but keeps working until its budget ends; the coders run only after its turn ends. A SWE-bench Pro task from the flipt repository.}
	\label{fig:case-trace-2}
	\vspace{6pt}
	\includegraphics[width=\textwidth]{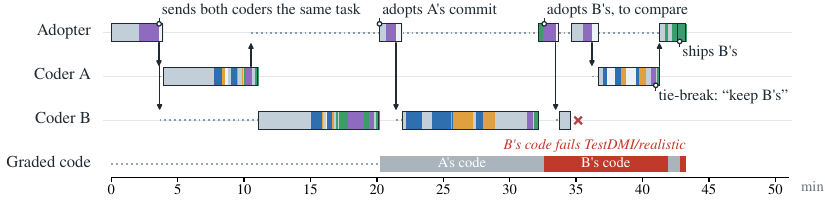}
	\caption{Read-only variant, unresolved: cross-review ships the commit that fails. A SWE-bench Pro task from the teleport repository.}
	\label{fig:case-trace-3}
\end{figure}

\paragraph{Delegation that works (Figure~\ref{fig:case-trace-1}).} In the Read-only variant the adopter can read files and adopt a commit, but cannot edit. It sends the same task to both coders. Coder A writes the fix, and every line of the graded patch comes from that edit; Coder B independently writes an equivalent one. The adopter adopts Coder A's commit and asks Coder B to review it, and Coder B agrees.

\paragraph{Briefed but not waited for (Figure~\ref{fig:case-trace-2}).} In the reference team the adopter briefs both coders at minute 12.7. The tool that sends the brief replies that the teammate runs on its next turn and that ending the turn is how the adopter waits. The adopter instead implements the change itself; at minute 27.2 it writes that it will wait for the coders' replies, then makes five more model calls. Its budget runs out at minute 33 before it submits, and only then do the coders run. Coder A reports its commit to the adopter, which has already stopped. The five target tests the graded code fails reference a constant name that the task statement does not give, so this failure does not follow from the missing delegation.

\paragraph{Cross-review that picks the failing commit (Figure~\ref{fig:case-trace-3}).} Again in the Read-only variant, the adopter adopts Coder A's commit, then Coder B's to compare them. Coder A reports that Coder B's commit passes all fifteen of its probes and recommends keeping it, and the adopter ships it. Coder B's code reads a file in a way that bypasses the permission error that a target test injects, and fails that test.

The figures show forms the organization takes; they are not a comparison with the single agent.

\subsection{Limitations}
\label{app:limits}

Each limitation below narrows how far a number in this paper reaches.

\paragraph{Three models, not a model sweep.} The three model snapshots of
Table~\ref{tab:five-factors} show that Adherence moves when the model
changes, but three snapshots cannot estimate how it varies across models in
general.

\paragraph{One agent at a time.} All agents in a team share one lock, so
exactly one agent runs at any moment; a message to a busy agent is delivered
when its turn ends. Organizations
whose value depends on agents working truly in parallel are therefore outside
what these experiments measure.

\paragraph{One workflow, not the space of workflows.} OpenCollab (Duo) is a
single decomposition of the task. A different script could deliver a
different organization just as faithfully and perform differently.

\section{System Details}
\label{app:part-system}

This part describes the code behind
Sections~\ref{sec:oc-architecture}--\ref{sec:oc-metrics}: the runtime and its
two controllers (\ref{app:arch}), how the code is organized (\ref{app:code}),
how a collaboration is declared (\ref{app:declare}), an audit of ten agent
artifacts, with OpenCollab scored by the same rule, against the conditions of
Section~\ref{sec:oc-controlled} (\ref{app:matrix}), the run record
(\ref{app:schema}), and how adherence is computed from it (\ref{app:axes}).

\subsection{Runtime and Control Regimes}
\label{app:arch}

Sections~\ref{sec:oc-architecture} and~\ref{sec:oc-declare} describe one
runtime that carries three control regimes: a Team, a script-ordered
Workflow, and a single agent. This appendix gives the limits every session
runs under, its context compaction, the two controllers above it, the
workspaces, and the token budget.

\paragraph{The session loop.} Every session, whichever controller opened it,
runs the same loop, a state machine with ten states. \texttt{IDLE} precedes
the first step. \texttt{PRECHECK} admits or refuses the next model call,
\texttt{CALLING\_LLM} makes it, and \texttt{HANDLING\_RESPONSE} sends a
reply with tool calls to \texttt{EXECUTING\_TOOLS}; \texttt{AUTOSAVING}
records the step and returns to \texttt{PRECHECK}. A tool whose result is
deferred parks the session in \texttt{AWAITING\_EVENTS} until the controller
resumes it. \texttt{DONE}, \texttt{STOPPED} and \texttt{ERROR} are the end
states (Table~\ref{tab:termination}). A transition that is not in the table of
legal transitions raises an error instead of being taken
(Figure~\ref{fig:lifecycle}).

\begin{figure}[!htb]
  \centering
  \includegraphics[width=\textwidth]{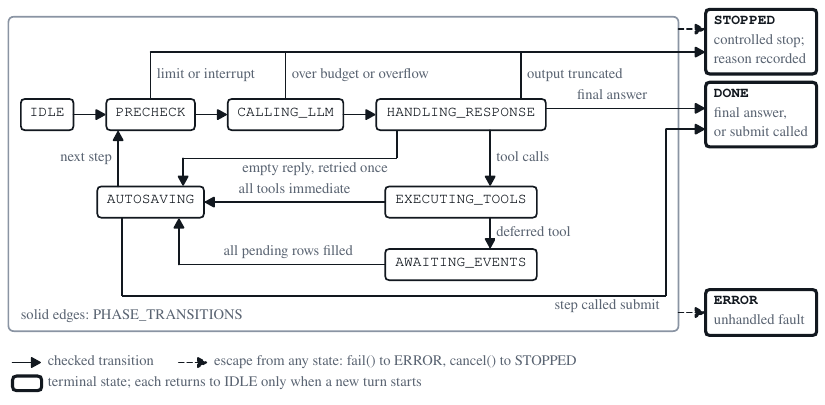}
  \caption{The session state machine that every session runs. Solid edges are
  the checked transition table; a transition not in the table raises an error
  instead of being taken. Two escapes bypass the table from any state: an
  unhandled fault ends the session in \texttt{ERROR}, and a cancellation from
  outside the loop ends it in \texttt{STOPPED}. \texttt{STOPPED} is a
  controlled stop at a limit or a refusal, with the reason recorded
  (Table~\ref{tab:termination}), not a fault. Each end state returns to
  \texttt{IDLE} only when a new turn starts.}
  \label{fig:lifecycle}
\end{figure}

\paragraph{Limits.} Every limit is checked before a call is made.
\texttt{PRECHECK} stops the session once it has reached its step limit, its
limit on repeated tool calls, or its token allowance. In
\texttt{CALLING\_LLM} the call is then priced with a
conservative estimate of its input, and refused if it would not fit in what
remains of the allowance. One check also runs after the call, because a call
can still cost more than its estimate; a role that ends up past its allowance
in this way stops at once. A role stopped at any of these points ends in
\texttt{STOPPED} with the reason recorded (Table~\ref{tab:termination}).
Because an organization decides only which sessions exist and when they
start, never what happens inside one, these limits bind identically whichever
controller opened the session.

\paragraph{Context compaction.} Before each model call the session compacts a
copy of its history in four stages applied in order.
The first caps every tool result at 16,000
characters and runs on every call. The other three act only when the
estimated history exceeds a trigger set 33,000 tokens below the model's
context window. They clear the content of old tool results while keeping the
five most recent verbatim, then drop whole old tool exchanges, oldest first,
and last replace the remaining old span with a summary written by a separate
model call. Each stage that fires writes a record to the event log. If the
provider still rejects the prompt as too long, a forced pass ignores the
trigger and the call is retried once; a second rejection stops the session.
The stored history is never altered by compaction.

\paragraph{Two controllers.} Above the loop, a controller decides which
sessions exist, when each one starts, and how their outputs travel
(Figure~\ref{fig:control-runtime}). The Team controller opens a session for
every declared role and leaves it to the model whether to send work along the
declared edges. The Workflow controller is a Python script that opens
sessions in a fixed order and passes their outputs on itself. A single-agent
run is one session on the same runtime. Table~\ref{tab:arms} lists
the three arms built from these regimes.

\begin{figure}[t]
\setlength{\belowcaptionskip}{0pt}
  \centering
  \includegraphics[width=\textwidth]{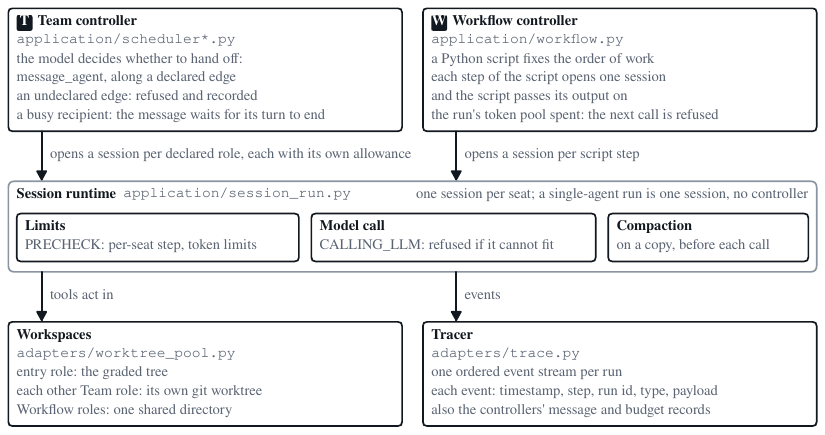}
  \caption{Two controllers over one session runtime. The Team controller opens
  a session for every declared role, each with its own token allowance;
  the model decides whether to hand work on with
  \texttt{message\_agent}, and a message along an undeclared edge is refused
  and recorded. The Workflow controller is a Python script that fixes the
  order of work and opens one session per step; a single-agent run is one session with no
  controller above it. All sessions run the same code, which checks every
  limit before a model call, acts in the role's workspace, and writes to one
  ordered event stream per run (Appendix~\ref{app:schema}).}
  \label{fig:control-runtime}
\end{figure}

\begin{table}[t]
\caption{The three arms. The last column says how a run is known to have
realized its arm. Only in the Team is it a measurement of the model's choice.
In the Workflow the script, not the model, issues every handoff it reaches.}
\label{tab:design}
\label{tab:arms}
\centering
\footnotesize
\begin{tabularx}{\textwidth}{lYll}
\toprule
\textbf{Arm} & \textbf{Definition} & \textbf{Differs from Single in} &
\textbf{Realized} \\
\midrule
Single & one agent works the task to completion & --- & trivially \\
Workflow & shared role cards; a script decides the order of work, what each
role receives, and when to stop & organization, enforced & by code \\
Team & shared role cards; the model decides whether to delegate, what to
send, and when to stop & organization, offered & measured,
$\widehat{\alpha}_{\mathrm{adh}}$ \\
\bottomrule
\end{tabularx}
\end{table}

\paragraph{Workspaces.} The entry role is the role that receives the task.
Its workspace is what the benchmark grades when the run ends, and we call
that workspace the graded tree. The Team controller builds every declared
teammate before the first model call, in its own git worktree over the same
object store, and gives it no task: a teammate starts working only when a
message reaches it.

\paragraph{The token budget.} Each role has its own fixed token allowance,
set before the run. Building a teammate takes nothing from the team's tokens,
so a teammate that is never used spends nothing. Grading runs after the run
has ended and is not charged to any role's allowance.

\subsection{How the Code Is Organized}
\label{app:code}

The system is two code bases. The runtime, the
Python package \texttt{opencollab}, runs the sessions and the two controllers
of Appendix~\ref{app:arch}. The evaluator, the package
\texttt{opencollab\_eval}, defines the arms, drives the batches and prepares
each run's patch for grading. Every path and count below is taken from the code of the adherence
experiment.

\paragraph{Four layers in the runtime.} Of the runtime's 167 Python files, 158
fall into four layers (Table~\ref{tab:code-layers}), and apart from five
import statements in the command-line interface, described below, no module
imports from a layer above its own. The domain layer holds the session state
machine and the other plain data types, and it uses nothing outside the Python
standard library. The application layer holds the session loop with its
limits, the two controllers and context compaction. The adapters layer holds
the code that reaches outside the process: model providers, tools, workspaces
and git worktrees, and the event log. The bootstrap layer reads a team file,
resolves its tool names to implementations, and assembles the other three
layers. Above the layers sits a public surface of nine files: the client,
whose \texttt{agent}, \texttt{team} and \texttt{workflow} calls start the three
regimes, and the modules for tools, environments, team files and
workflows. Figure~\ref{fig:clean-arch} draws the four layers as the rings of
a clean architecture; the layer contract in the code carries that name.

\begin{figure}[!htb]
  \centering
  \includegraphics[width=\textwidth]{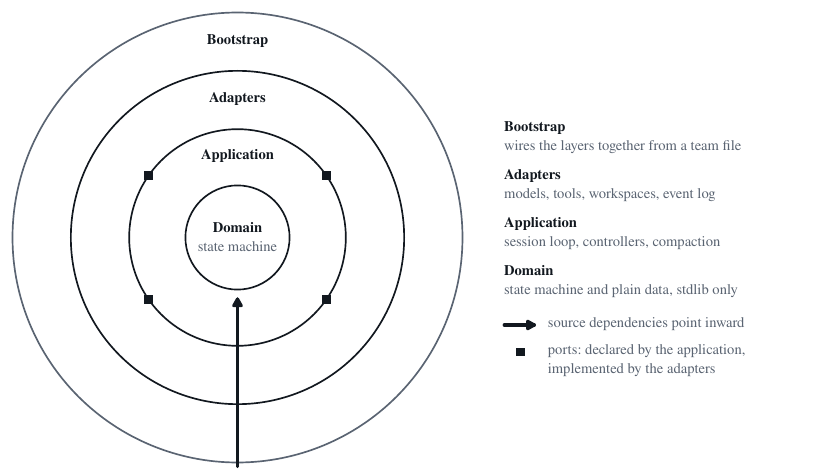}
  \caption{The runtime's four layers as a clean architecture. Every import
  points inward, from bootstrap to adapters to application to domain; the
  application declares ports, interfaces that the adapters implement, so it
  never imports a model provider, a tool or a workspace directly.
  Figure~\ref{fig:code-layers} gives the files in each layer.}
  \label{fig:clean-arch}
\end{figure}

\begin{table}[t]
\centering
\small
\caption{The runtime's layers, top to bottom.
Apart from five statements in the command-line interface (see text), no
module imports from a layer above its own. \emph{Lines} counts every line of
the layer's Python files.}
\label{tab:code-layers}
\begin{tabular}{@{}l r r >{\raggedright\arraybackslash}p{0.60\textwidth}@{}}
\toprule
\textbf{Layer} & \textbf{Files} & \textbf{Lines} & \textbf{What it holds} \\
\midrule
Public surface & 9 & 894 & The client (\texttt{sdk/client.py}); the modules
for tools, environments, team files and workflows \\
\addlinespace
Bootstrap & 22 & 6,690 & Loading a team file (\texttt{team\_config.py},
Appendix~\ref{app:declare}); resolving tool names (\texttt{tool\_registry.py});
assembling a session, a team or a workflow run \\
\addlinespace
Adapters & 72 & 16,171 & Model providers and the per-model capability table
(\texttt{llm/}); the tools, among them the patch tool, the shell and
\texttt{message\_agent} (\texttt{tools/}); local, container and worktree
workspaces; the sandbox policy (\texttt{safety.py}); the per-run event stream
(\texttt{trace.py}, Appendix~\ref{app:schema}) \\
\addlinespace
Application & 51 & 15,202 & The session loop and its limits
(\texttt{session\_run.py}); the Team controller
(\texttt{application/scheduler.py} and its \texttt{scheduler\_*} and
\texttt{\_scheduler\_*} modules); the Workflow controller (\texttt{workflow.py}
and its \texttt{workflow\_*} modules); context compaction (\texttt{shaping/}) \\
\addlinespace
Domain & 13 & 1,820 & The ten-state session machine and its legal
transitions (\texttt{session.py}, Appendix~\ref{app:arch}); the team topology
(\texttt{team.py}); the table of sessions that the Team controller keeps, as
plain data (\texttt{domain/scheduler.py}); agents, events and token estimates \\
\bottomrule
\end{tabular}
\end{table}

Figure~\ref{fig:code-layers} draws the same layers with the modules this
appendix names.

\begin{figure}[t]
  \centering
  \includegraphics[width=\textwidth]{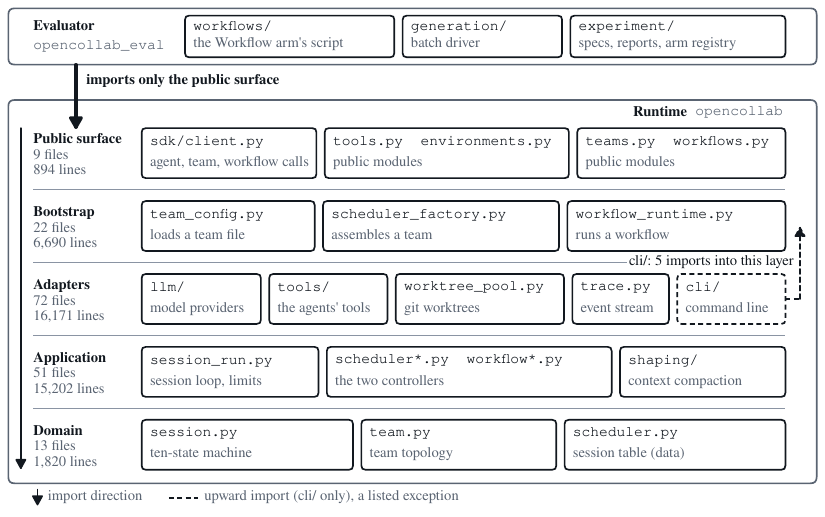}
  \caption{The two code bases. The runtime's
  layers run top to bottom, each with its file and line counts
  (Table~\ref{tab:code-layers}), and a module imports only from its own layer
  or the ones below; the one exception is five imports from the command-line
  interface into the bootstrap layer (dashed), which the experiments do not
  use. The evaluator reaches the runtime only through its public surface.}
  \label{fig:code-layers}
\end{figure}

\paragraph{The layering is checked.} Four import contracts, run in continuous
integration, state the rules: the order of the layers; that the domain imports
none of the runtime's third-party dependencies; that the client and the
modules for tools, environments and workflows never import the adapters or the
domain directly; and that no two modules of the application, adapters or
domain layer import each other in a cycle, except one known cycle between two
modules of the Workflow controller. All four hold. We confirmed this by
running the contracts on the code of the experiments, and confirmed that
the check can fail by adding two forbidden imports to the domain, which broke
exactly the two contracts they violate. Counting import statements directly
agrees: every import in the four layers points into its own layer or a lower
one, except five statements in three files of the command-line interface,
which belongs to the adapters layer, that reach up into the bootstrap layer.
The contract file lists these as known exceptions. The experiments never
start the runtime through the command-line interface.

\paragraph{The evaluator.} The evaluator's modules import the runtime only
through its public surface. An arm is therefore defined by what it passes
through that surface, namely role prompts, tool bundles, and a team file or a
script, while the session loop and the code that enforces its limits are the
runtime's own in every arm. The team files are kept in the runtime's
repository, one per configuration. The batch driver
(\texttt{generation/gen\_prediction\_batch.py}) runs every arm over the same
ordered task list. The \texttt{experiment/} package holds the batch
specification that fixes a configuration before it is launched, and the
per-configuration report that computes adherence and its intervals from the
run records.

\subsection{Declaring a Collaboration}
\label{app:declare}

A team is declared in one file (Figure~\ref{fig:mechanism}b). The file names
the entry role, gives each role a system prompt, the agent profile it starts
from, the names of the tools it may call and its token allowance, optionally
overrides the role's model, names the context policy every session runs, and
lists the directed edges
along which one role may message another. Tool names are resolved to
implementations from a registry when the team is assembled, so one file
yields the same tool set in every run that uses it.

Roles are separated by what they can call, not only by what they are asked to
do. A role whose tool set contains no file-editing tool cannot edit through
the file tools, whatever its prompt says. A boundary set in the tool set is
enforced at call time, so a call outside it is refused rather than left to
the model's choice.

The runtime enforces the declaration at the moment of each call. A tool call
outside the calling role's tool set, and a message along an edge the file does
not declare, are refused, and the refusal is recorded as an event
(Appendix~\ref{app:schema}).

The edges decide who can reach whom. In the Star team of
Figure~\ref{fig:topologies} the entry role and each Coder can message each
other, and the two Coders cannot message each other; in the Ring the analyst
can message only the coder, the coder only the tester and the tester only the
analyst. Every team needs
a path back to the entry role: without one, a teammate could not return its
result, and a delegation that did happen but could not be returned would look
like a model that never delegated.

The roster is fixed before the first model call and cannot change during a
run; a request to add an agent is refused and recorded.

\subsection{Audit of Ten Agent Artifacts}
\label{app:matrix}

Section~\ref{sec:oc-controlled} names seven conditions for a controlled
comparison of agent organizations. Five ask whether a factor can be set
explicitly before a run: the model, the tools, the token budget, the context
policy and the topology. Two ask whether a run can be checked afterwards
against what it was declared to be (\emph{Realized}, \emph{Compared}).
Table~\ref{tab:instruments} marks each condition on ten public agent
artifacts, with OpenCollab scored by the same rule in the last row. This
appendix gives the rule and the reason behind every mark other than
\checkmark.

\paragraph{How the marks were placed.} Nine artifacts were read in code at a
pinned version; Claude Code is closed-source, and its row rests on vendor
documentation. A \checkmark\ records that a researcher sets the factor in a
setting the artifact itself reads, such as a configuration file, a
declarative specification or a command-line option, and that the artifact
applies it to every agent. A \partmark\ records that the artifact ships a
mechanism but takes it only in the researcher's own code, or meets the
condition only in part. A $\times$ records that no mechanism meets the
condition even in part; a setting that only moves the threshold of a
built-in behavior or switches it off does not count as one. Each row is read
on the mechanism that leaves the organization to the model, such as teams,
handoffs, subagents or agents called as tools, as OpenCollab's row is read on
its Team controller; a script the researcher fixes before the run is outside
the table. Marks follow each artifact's default settings. An artifact that
runs a single agent is marked on that agent, and $\times$ under
\emph{Topology} and \emph{Compared}. A condition is marked \checkmark\ when:
\begin{itemize}
\item \emph{Model}: every agent's model can be set, and a declared model is
not replaced by another without notice;
\item \emph{Tools}: each role can be given its own tool set;
\item \emph{Budget}: each model call is admitted against the calling agent's
remaining allowance before it is sent, and refused if it does not fit;
\item \emph{Context}: a setting chooses the policy that decides what each
model call sees;
\item \emph{Topology}: the roles and the edges between them are fixed before
the run rather than started by the model while it runs;
\item \emph{Realized}: each run's record holds the whole declared
organization, including roles that never act, with every model call and tool
call attributed to its run and agent;
\item \emph{Compared}: \emph{Realized} holds, every message between agents
names its sender and receiver, and every attempt to leave the declaration is
refused or flagged in the record.
\end{itemize}
The pinned versions and the file-and-line evidence behind every mark are
released with the paper.

\paragraph{Why the other rows fall short.} On \emph{Model} and
\emph{Tools}, AG2 and LangGraph take both only as constructor arguments in
the researcher's code, HAL passes a model name on to the agent's own code,
and a Codex CLI role file can only switch tools off from a fixed set; HAL
leaves the tools to the agent. On \emph{Budget}, no external artifact prices
a call against the remaining allowance before sending it. AutoGen, SWE-agent,
OpenHands, Claude Code and Inspect AI stop an agent once its limit has been
reached or after a call has returned; AG2 only raises alerts, LangGraph
limits steps, DeepSeek Harness caps the output of a single call, Codex CLI's
token budget sits behind a feature marked as not ready for external use, and
HAL bounds a task by wall-clock time. On \emph{Context}, AutoGen and SWE-agent
choose the policy in configuration; OpenHands, DeepSeek Harness, AG2,
LangGraph and Inspect AI let it be chosen only in part or only in code;
Claude Code and Codex CLI expose only a threshold and an off switch, and HAL
does not run the agent's loop. On \emph{Topology}, AutoGen and Inspect AI
hold the agents and whom each may call in a configuration file; AG2 and
LangGraph wire agents in code; in Claude Code, Codex CLI, DeepSeek Harness
and OpenHands the roles are declared but the model decides at run time which
agents to start, and the edges can be held only by a hook the researcher
writes; HAL leaves the organization to the agent's own code. On \emph{Realized}, SWE-agent stores its full configuration with
every run; seven rows record only part of the declaration, such as the
settings of the agents that were started but not of those that were not, or
roles without their models; AutoGen's logs carry no run identifier, and
LangGraph's traces hold what each call sent but not the graph. On
\emph{Compared}, only Inspect AI refuses or flags departures in its record,
and only for agents joined as tools; SWE-agent's record is complete, but it
runs a single agent.

\paragraph{Our own row.} OpenCollab's row is read on its Team controller.
Its team file sets each role's model, tools, token allowance and context
policy together with the edges; every call is admitted against the calling
role's remaining allowance before it is sent (Appendix~\ref{app:arch}); each
run's record opens with the declared organization, every role with its model
and tools and every edge, so a declared teammate that never acts still
appears; and a message along an undeclared edge or a call outside a role's
tool set is refused and recorded (Appendix~\ref{app:schema}).

\subsection{The Run Record}
\label{app:schema}

Section~\ref{sec:oc-observability} replaces prose logs with an ordered event
stream. This appendix gives the files each configuration writes, the event
types, and the recorded termination reasons.

\paragraph{Three files per configuration.} The batch manifest holds the
assigned condition: arm, task list, role cards with their sha256, per-role
allowance, the runtime and evaluator versions, and the driver's
environment. The event log holds the realized side, one ordered stream per
run. The metrics file holds one row per run, with the image the run was
generated in. The run id
joins the three files, and every event carries it. Rates are computed from
individual runs, never from pre-averaged configurations, and adherence is
computed from the manifest and the event log after the run
(Appendix~\ref{app:axes}).

\paragraph{Events.} Every event carries a timestamp, a step number that
increases within the run, the run id, the event type, and a payload whose
fields depend on the type and, for every type the analyses below read, name
the role. Table~\ref{tab:events} lists the
seven types the analyses read and what each one supports; the controllers
also write message and budget records. Because the payload names the role, what each role
spent, how many steps it took and which tools it called are read directly off
the stream, with no prose log to parse.

\begin{table}[t]
\centering
\small
\caption{The seven event types. Model calls and agent finishes settle
whether a teammate worked; tool executions of \texttt{message\_agent} show
who briefed whom; a session termination, read with the token counts of the
model calls, identifies the runs whose decision the allowance cut short.}
\label{tab:events}
\begin{tabular}{@{}>{\raggedright\arraybackslash}p{0.19\textwidth}>{\raggedright\arraybackslash}p{0.33\textwidth}>{\raggedright\arraybackslash}p{0.40\textwidth}@{}}
\toprule
\textbf{Event type} & \textbf{What it adds to the common fields} &
\textbf{What it supports} \\
\midrule
Model call & Token counts and latency of the step & Per-role spend; the
organizational overhead of Appendix~\ref{app:cost-metrics} \\
\addlinespace
Tool execution & Tool name, arguments and result & Who briefed whom; which
edges were addressed. Refused messages and refused tool calls are written as
separate events \\
\addlinespace
Agent finish & Whether the role produced content & Whether a teammate
worked, read together with its model calls \\
\addlinespace
Session termination & The role's termination reason
(Table~\ref{tab:termination}) & Which runs were cut off by the allowance
after delegation began \\
\addlinespace
Refused spawn & The reason for the refusal & That the roster cannot change
within a run (Section~\ref{sec:oc-declare}); that a single-agent run cannot
coordinate \\
\addlinespace
Worktree change & Commit sha & Which role's commits reached which worktree
(Section~\ref{sec:oc-observability}) \\
\addlinespace
Context shaping & Which compaction stage fired this turn & Whether a call's
input was compacted (Section~\ref{sec:oc-controlled}) \\
\bottomrule
\end{tabular}
\end{table}

\begin{table}[t]
\centering
\footnotesize
\caption{The end states of a session (Figure~\ref{fig:lifecycle}) and the reasons recorded with them.
Each reason is a string whose fixed prefix is shown; some continue with a
count, such as the tokens used. The wall-clock limit is not among them: the
evaluator stops a run that reaches it from outside the runtime and records
that in the run's metrics row.}
\label{tab:termination}
\begin{tabular}{@{}l>{\raggedright\arraybackslash}p{0.36\textwidth}>{\raggedright\arraybackslash}p{0.46\textwidth}@{}}
\toprule
\textbf{State} & \textbf{Recorded reason} & \textbf{When it is recorded} \\
\midrule
\texttt{DONE} & \texttt{completed}; \texttt{submitted} & The role gave a final
answer, or called the submit tool. \\
\midrule
\texttt{STOPPED} & \texttt{budget exceeded} & At \texttt{PRECHECK}, the role
had already spent its allowance. \\
 & \texttt{budget exhausted before model call} & The next call, priced before
it was sent, would not fit in what remained. \\
 & \texttt{budget exceeded after model call} & The last call cost more than its
estimate and took the role past its allowance. \\
 & \texttt{step limit reached} & The session reached its step limit. \\
 & \texttt{loop block limit reached} & The role kept repeating a tool call
without progress. \\
 & \texttt{context overflow} & The prompt exceeded the model's context window
even after compaction. \\
 & \texttt{output truncated} & The provider cut the response off at its
generation limit. \\
 & \texttt{required tool was not called after correction} & The role was
required to call a tool, was reminded once, and still did not. \\
 & \texttt{interrupted by user}; \texttt{cancelled} & The session was stopped
from outside. \\
\midrule
\texttt{ERROR} & the exception's type and message & An unhandled fault,
including a role whose container or workspace was withdrawn during the run. \\
\bottomrule
\end{tabular}
\end{table}

\paragraph{Cost metrics.}\label{app:cost-metrics}

\textbf{Organizational overhead.} Multi-agent configurations incur prompt costs even when the assigned organization is not fully realized. Because Table~\ref{tab:events} logs token usage by role, OpenCollab measures this prefix tax by comparing, on the same tasks, the cost of Team runs with $A_r = 0$ against the cost of the single-agent baseline. With $C(z)$ and $C(0)$ the potential total costs under target Team configuration $z$ and Single, defined like $Y(z)$ and $Y(0)$ in Appendix~\ref{mas:setup},
\begin{equation}
\begin{gathered}
	\mathrm{Overhead}_0 := \mathbb{E}[C(z) - C(0) \mid A(z) = 0], \\
	\widehat{\mathrm{Overhead}}_0 = \frac{1}{n_0} \sum_{i:\,A_i(z) = 0} \bigl(C_i(z) - C_i(0)\bigr),
\end{gathered}
\end{equation}
where the sum runs over the $n_0$ tasks whose Team run has $A_r = 0$ and pairs each with the Single run of the same task. This metric is the additional cost of carrying the Team configuration among runs that do not fully realize it. Comparing these Team runs with the unconditional Single mean would instead mix that cost with differences in task composition, since non-adherent runs need not fall on typical tasks.

\textbf{Workload balance.} For fully adherent Team runs ($A_r = 1$), OpenCollab measures how computational effort is divided across the active roles. Letting $p_s(r) = C_s(r) / C(r)$ denote the token proportion consumed by role $s$, the framework defines workload balance using normalized Shannon entropy across the active roles $\mathcal{S}$:
\begin{equation}
	\mathrm{Balance}(r) = -\sum_{s \in \mathcal{S}} \frac{p_s(r) \log p_s(r)}{\log |\mathcal{S}|}.
\end{equation}
The aggregate balance metric is computed as the expectation of $\mathrm{Balance}(r)$ across all adherent runs where $A_r = 1$. A value near one reflects an egalitarian division of labor, whereas a value near zero indicates that a single role dominated execution despite the presence of teammates.

\subsection{Adherence, Axis by Axis}
\label{app:axes}

Section~\ref{sec:oc-adherence} defines a run as adherent when every axis that
applies to its arm is adherent. This appendix states the six axes
(Table~\ref{tab:adherence}), how a verdict is reached, and which axes can
vary in these experiments.

\paragraph{How a verdict is reached.} Each run has an assigned side, written
in the configuration's manifest before the run and restated as events in the
run's event log, and a realized side, the rest of that log. Each axis
compares the two and returns \emph{adherent}, \emph{deviant} or
\emph{unverifiable}. $\widehat{\alpha}_{\mathrm{adh}}$ is the share of a
configuration's runs that are adherent on every axis that applies. An axis
that does not apply to an arm, such as delegation for the single agent, is
left out of that arm's conjunction rather than scored as a pass.

\paragraph{Which axes can vary here.} An adherent verdict says that a run did
not depart from its declaration on that axis, which is not the same as the
axis having had anything to measure. Four of the six axes are held by the
runtime in these experiments. The runtime offers each agent only the tools of
its assigned set and refuses any other call, so no agent can act outside it,
which is the one deviation the role-boundary axis records. An allowance
refuses the call that would exceed it and records the refusal, so no role can
pass its allowance unrecorded, the budget-sharing deviation. A message along
an edge the topology does not declare is refused. Every role runs the same
context policy. In the Team the conjunction is
therefore decided by participation and delegation. A teammate starts working
only when a message reaches it (Appendix~\ref{app:arch}), so a run in which
every declared teammate worked has also delegated, and the conjunction comes
down to one test: a Team run is adherent when every declared teammate, every
role other than the entry role, spent tokens and produced at least one model
output. This is computed for every run from the run record: the roles from
the declaration restated in the event log, and each role's tokens and outputs
from its saved session. The four held axes
still have to be declared: a comparison that gives each model a different tool
set, allowance or context window is exactly the case they exist for.

\paragraph{Unverifiable counts against adherence.} When the log cannot
establish what happened on an axis, the verdict is \emph{unverifiable} and the
run is not counted as adherent. This happens, for instance, when a run was
cut off after the entry agent had begun to delegate but before every
teammate had worked (Appendix~\ref{app:protocol} lists the cases). Such a run has $A_r = 0$ (Appendix~\ref{mas:setup}). Missing evidence
can therefore lower a reported rate but never raise it; counting every
unverifiable run as adherent instead shows how much the rate depends on this
rule.

\begin{table}[t]
\caption{The six adherence axes. The assigned value is written before the
run, and the realized value is read from the event log.}
\label{tab:adherence}
\centering
\small
\begin{tabularx}{\textwidth}{lYY}
\toprule
\textbf{Axis} & \textbf{Realized value, read from} & \textbf{Deviation} \\
\midrule
Participation & the teammates that spent tokens and produced at least one
model output & a declared teammate did no work \\
Delegation & a teammate that spent tokens and produced at least one model
output & no such teammate, where the assigned regime offers or requires
delegation \\
Role boundary & the tool calls attributed to each agent & a call outside the
agent's assigned tool set is executed \\
Budget sharing & each role's token draw against its own allowance & a role
reached its allowance and no refusal was recorded \\
Information flow & \texttt{message\_agent} calls in the event log, in every
arm & a message travels along an edge the topology does not declare \\
Context policy & the compaction level applied on each turn & a level other
than the assigned one fires \\
\bottomrule
\end{tabularx}
\end{table}

\end{document}